\newcommand{\col}[1]{\textcolor{red}{#1}}
\definecolor{purple1}{rgb}{128,0,128}
\newcommand{\bea}{\begin{eqnarray}}
\newcommand{\ea}{\end{eqnarray}}
\theoremstyle{definition}
\newtheorem{theorem}{Theorem}
\newtheorem{proposition}{Proposition}
\definecolor{dfcol}{cmyk}{1, 0.2108, 0.13, 0.3}
\newcommand{\df}[1]{\ifthenelse{\boolean{}}{\textcolor{dfcol}{[{\bf DF}: #1]}}{}}
\renewcommand{\[}{\begin{equation}}
\renewcommand{\]}{\end{equation}}
\newcommand{\ket}[1]{|#1\rangle}
\newcommand{\bra}[1]{\langle#1|}
\newcommand{\braket}[2]{\langle#1|#2\rangle}
\newcommand{\pro}[2]{|#1\rangle\langle#2|}
\newcommand{\mean}[1]{\langle#1\rangle}
\newcommand{\tr}{\mathrm{tr}}
\newcommand{\norm}[1]{\left\|#1\right\|}
\newcommand{\R}{{\hat{\rho}}}
\definecolor{mygray}{gray}{0.6}
\newcommand{\hamz}{\hat{H}}
\newcommand{\hamd}{\hat{V}}
\begin{document}

%\preprint{APS/123-QED}

\title{Maximum quantum battery charging power is not an entanglement monotone}
\title{Beneficial and detrimental entanglement for quantum battery charging}

\author{Ju-Yeon Gyhm}
%\email{kjy3665@snu.ac.kr}
\affiliation{%
Seoul National University, Department of Physics and Astronomy, Center for Theoretical Physics, Seoul 08826, Korea
}%
\author{Uwe R. Fischer}%
%\email{uwerf@snu.ac.kr}
\affiliation{%
Seoul National University, Department of Physics and Astronomy, Center for Theoretical Physics, Seoul 08826, Korea
}%

\date{\today}% It is always \today, today,
             %  but any date may be explicitly specified

\begin{abstract}
We establish a general implementation-independent approach to assess the 
potential advantage of using highly entangled quantum states 
between the initial and final states of the charging protocol 
to enhance the maximum charging power of quantum batteries. 
It is shown that the impact of entanglement on power can be separated 
from both the global quantum speed limit associated to an optimal choice of  
driving Hamiltonian and the energy gap of the batteries. 
We then demonstrate that the quantum state advantage of battery charging, defined as the power obtainable for given 
quantum speed limit and battery energy gap, is not an entanglement monotone.
A striking example we provide is that, 
counterintuitively, independent thermalization of the local batteries, completely destroying any entanglement, can lead to larger charging power than that of the initial maximally entangled state. 
Highly entangled states can thus also be potentially {\em disadvantageous} when compared to 
product states.  
We also demonstrate that taking the considerable effort of 
producing highly entangled states, such as W or $k$-locally entangled states, is not sufficient to obtain quantum-enhanced scaling behavior with the number of battery cells. 
{Finally, we perform an explicit computation for a Sachdev-Ye-Kitaev battery charger 
to demonstrate that the quantum state advantage allows the instantaneous power to exceed its 
classical bound.}
\end{abstract}

%\keywords{Suggested keywords}%Use showkeys class option if keyword
                              %display desired
\maketitle

%\tableofcontents

\section{Introduction} 

The modern notion of a ``quantum advantage" has its origin in  %become popular by 
%the promise of 
the potential of quantum computers %predominantly used to characterize the %temporal 
to outperform their classical counterparts \cite{Pan,Madsen}.
Other areas of information technology,  %partially already experimentally realized, areas 
for which a quantum advantage can be defined, are for example 
%quantum 
metrology~\cite{Lloyd_Science,Braun} %,Liu_2020, PhysRevA.84.022302, PhysRevLett.127.260501}, 
which has inter alia been used in gravitational wave astronomy \cite{Abadie},  
or %quantum 
cryptography~\cite{RevModPhys.74.145}  %, Pirandola:20} 
%quantum simulation~\cite{Kim2022, Daley2022}, 
and %quantum 
communication~\cite{Gisin}. %PhysRevLett.81.5932}. 
%, and quantum thermodynamics~\cite{PhysRevLett.128.140501, PhysRevE.76.031105}.'

Where however quantum advantage seems to be most relevant for many practical reasons is on the field of energy storage and retrieval \cite{Auffeves}. 
% put emphasis on energy as the most basically important physical quantity to explore how quantum technology %advantage can lead to benefits in the storage and retrieval of energy
%For {\em quantum batteries}, one may conjecture that the corresponding quantum advantage should in some way be  related to entanglement as a resource. 
Quantum battery research has focused on concrete means to 
increase charging power~\cite{Binder_2015,PhysRevLett.118.150601,Campaioli2018, PhysRevE.105.064119,Dario,PhysRevLett.128.140501} as well as on work extraction~\cite{Alicki,PhysRevLett.111.240401,Hamma,Barra,PhysRevLett.129.130602,PhysRevLett.130.210401}, see for a review including quantum thermodynamical aspects %al aspects 
\cite{Dutta}. 
{In addition to these theoretical studies, increasingly progress has very recently been made towards an experimental realization of quantum batteries %, see, e.g., Refs.~
\cite{quachshort,Hu_2022,Joshi,Cruz}.} 

Quantum technology is generally driven by the increasing demand for a practically useful % (notion of) 
quantum advantage, 
%Although the notion of quantum advantage  seems commonplace, 
%~\cite{Daley2022, hu2022optimal, Ristè2017}. % \col{references}, 
which however can be an ambiguous and strongly model dependent notion \cite{Andolina}. %where the quantum advantage comes from. 
{One may conjecture that %(some form of) 
entanglement is the most important resource for the 
quantum advantage of charging, see, e.g., Refs.~\cite{Ferraro,Salamon}, 
but results to the contrary have been found%, e.g., in Refs.
~\cite{PhysRevE.102.052109,Le}.} 
%{and more recently entanglement has been demonstrated to be necessary  to enhance the extractable work of quantum batteries~\cite{PhysRevLett.129.130602}.}
% Many research suggested that entanglement is a resource for not only work but also power.} 
As far as we are aware, there is however no general {device- and implementation-independent} relation which expresses the {potential} advantage of using entangled states 
solely in terms of an arbitrary quantum state specified by the density matrix operator $\R$.
Our major aim in this work is to fill this gap, such as to be able to quantify in a general manner how entanglement contributes to quantum advantage.

%\col{Can you be more specific here? I mean they {\em do} establish a relation of some kind of entanglement of the batteries and charging power}.
Physical mechanisms affecting %related to enhancing 
the maximal charging power are the external driving for the charging, the 
general battery setup, and the energy gap between the ground state and the excited state of the batteries~\cite{Salamon, Campaioli2018}. 
%Since entanglement is a distinguishing feature of quantum mechanics, the first thing we did is isolating quantum state advantage from other variables. 
We show in the following that the maximum charging power can be classified and separated into three parts, the energy gap of the batteries, the quantum speed limit associated to the time dependent  driving Hamiltonian, 
and {what one may call} a genuine quantum state advantage depending solely on the quantum state. 
We thereby provide a general relation rigorously establishing the relation of entanglement and quantum charging advantage, by {\em isolating} the impact of entanglement from the quantum speed limit and the energy gap. 
%\col{Here we should point out with refs that entanglement of total energy has been done already, 
%and how our research is distinct from obtaining the quantum state advantage of total, final energy}
We note in this regard that the energy gap and quantum speed limit constitute ``trivial" advantages in the sense that one is  always able to obtain a higher charging power from the faster time evolution of the 
quantum state and from a large energy gap. Isolating these, we aim at finding 
the optimal charging path in the Hilbert space outperforming the classical parallel charging protocol. 
This can then represent a genuine quantum state advantage of charging, because this Hilbert space trajectory is determined by whether the states are entangled or not. 
We point out in this regard, and importantly for our argument, that the quantum speed limit (QSL) has no bias to be enhanced by entanglement in a given quantum state. Specifically, the QSL samples the full 
Hilbert space generated by arbitrarily choosing the driving Hamiltonians. It can therefore not depend on any kind of bi- or multi-partition of the system, and thus also not on any corresponding measure of entanglement. 
%\col{Maybe here some short description of why QSL is not really entanglement dependent, relation to entanglement
%monotones}

By isolating the quantum advantage related to entanglement, we obtain an explicit expression for the entanglement contribution to charging power,  
%As follows from our expression for the quantum advantage of entanglement, 
from which it {generally} follows that while entanglement is necessary for a 
quantum state advantage  over classical charging, it is not sufficient,  cf.~Ref.~\cite{Campaioli2018}.  
We thus establish the latter fact in a device- and implementation-independent way. 
We find that certain varieties of entanglement can {\em diminish} charging power and
%By our generalized relation between entanglement and quantum advantage, 
prove, as a theorem, % \ref{theorem2}, 
that  quantum state advantage  is not an entanglement monotone. 
We demonstrate this for a number of examples. Therefore, 
the quantum advantage of battery charging cannot constitute a measure of entanglement.
%}
%{ This is similar to a salient property of the quantum Fisher information (with time as a parameter), 
% which also does not represent an entanglement monotone  \cite{erol2014analysis}.} %\col{references}, 
% the quantum advantage of battery charging defined by us
 %in the following is not an entanglement monotone and thus also cannot be used as a measure of the entanglement created by charging.}

%\colb{There exists the problem of uniquely identifying charging power with entanglement: Define entanglement monotone for LOCC (local operations and classical communications) \cite{Nielsen,LOCC,Guifre}  }{ and occurring in the bound} 
%\colb{Primary Goal: Isolate in the  quantum state advantage the state dependence (which encodes the entanglement) 
%from obvious quantum speed limit advantage}
%\colb{Show entanglement can even decrease bound for power}
After introducing the general battery setup, we discuss the conventional approach for evaluating the power bound via the covariance matrix, which is however, as we show, hampered by containing also classical 
{in addition to quantum  correlations.} To resolve this, we introduce {what we coin a 
{\em commutation matrix,}} and from this establish the fundamental entanglement-determined bound on charging power.
{Our approach explicitly
demonstrates that it is not possible to assess the maximally possible
quantum advantage of battery charging solely 
with an entanglement measure. }
 This leads, inter alia, to the (at first glance counterintuitive) possibility that increasing charging power can 
 potentially be achieved  by destroying entanglement, e.g., by thermalization. We show this explicitly for highly  entangled initial states during their Lindbladian evolution to thermalized states. 
We also demonstrate, for globally entangled W and $k$-locally entangled pure states, that their quantum state advantage  
%is surprisingly low. That is, the ``advantage" 
does not display a scaling with the number of cells $N$. Therefore, their obtainable power does not scale 
%superextensively 
faster than linear in $N$.  
We perform furthermore an explicit computation for Sachdev-Ye-Kitaev batteries \cite{Dario} 
demonstrating that the quan{tum state advantage defined by us indeed peaks at 
approximately the same time as the instantaneous power does. 

 We provide our major results in the main text, and defer detailed derivations 
 and proofs to an %three % four
  Appendix.

\section{Quantum battery power}
We consider a quantum battery setup composed of independent cells, % which have the battery 
with Hamiltonian 
$\hamz=\sum_{i=1}^N \hamz_i\bigotimes_{j\neq i}\hat{I}_j$, where $\hamz_i$ indicates the Hamiltonian of the 
$i$th cell. At time $t$ an instantaneous quantum state of the battery, represented by a density matrix 
$\R(t)$, %{of the battery} %{written in terms of eigenstates of $\hamz$}, 
evolves due to switching on for a finite time 
a time-dependent driving Hamiltonian, $\hamd (t)$ (which contains the 
time independent battery part $\hamz$), according to the von Neumann equation ($\hbar =1$)
\bea
\label{eq:state_evolution}
%\frac{d\R}{d t}
\frac{\partial\R}{\partial t} 
=i[\R,\hamd(t)] .
\ea
The energy stored in the battery is %a measure of the energy stored in the battery, 
$E(t) = \tr(\hamz \R)$ and 
%, varying depending on $\hamd(t)$ and $\R$.
the instantaneous battery charging power is then defined as 
a change in the (expectation value of the) energy stored in the battery cells per unit of time, 
\bea
\label{eq:power}
P(t)=\tr\left(\hamz\,  %\frac{d \R}{d t} 
\frac{\partial \R}{\partial t} 
\right) 
=\tr\left(i\hamz[\R,\hamd(t)]\right). %\big \Big, \bigg, \Bigg \ ,
\ea
%~\cite{Alicki}. 
We aim at determining to which extent the entanglement of the state $\R$
 impacts {the upper bound on the power of battery charging.} This can be reformulated as determining the maximum instantaneous power of battery charging 
 we can obtain from the state $\R(t)$ by manipulating $\hamd(t)$. % and $\hamz$.
 
%The general {Heisenberg} inequality 
{The relation $P=\tr\left(i\hamz[\R,\hamd(t)]\right) = \tr\left(i[\hamz,\hamd(t)]\R\right)$ 
for the instantaneous power, defining
variances %\col{standard deviations?}
of operators by
$\mean{\Delta\boldsymbol{\cdot}^2}= \tr(\boldsymbol{\cdot}^2\R)-\tr(\boldsymbol{\cdot}\R)^2$, 
%by Cauchy-Schwarz 
then reads by the Heisenberg-Robertson inequality} 
%from Heisenberg’s uncertainty principle,
\bea\label{uncertanty}
P(t) %= \tr\left(i\hamz[\R,\hamd(t)]\right) %\tr\left(-i[\hamz,\hamd(t)],\R\right)
\leq2\sqrt{\mean{\Delta \hamd^2(t)}\mean{\Delta \hamz^2}}. %\\ \notag
\ea
This implies  
that %\col{ variances} 
{the standard deviations}
of $\hamd(t)$ and $\hamz$ determine the bound of power~\cite{Salamon}, and 
leads one to conjecture that the \emph{covariance matrix} ~\cite{PhysRevLett.99.130504,li2008separability,gittsovich2010quantifying}, 
which generally relates to the variance %\col{standard deviation?} 
of any given set of observables, can be applied to assess the 
quantum state advantage of battery charging.

\section{Covariance matrix approach}
\subsection{Definition of covariance matrix} 
We briefly review the definition of the covariance matrix of multipartite systems and its properties. %\cite{gittsovich2010multiparticle, 
{Covariance matrices, familiar from continuous variable systems (in particular for Gaussian states), were introduced by  %Refs.~\cite{PhysRevLett.99.130504,gittsovich2010quantifying} 
Refs.~\cite{PhysRevLett.99.130504,li2008separability,gittsovich2010quantifying},
to address the separability of finite-dimensional quantum states, and were demonstrated to provide a reasonably general framework to capture entanglement, in particular 
also by linking the covariance matrix criterion to previously established criteria for entanglement.}

% to define entanglement rank.

To be able to define the covariance matrix, we need to first establish a set of {\em local observables}.  
We assume the $i$th cell lives in an $n_i$-dimensional Hilbert space.
The local observable set of the $i$th cell, $\hat{\bm{M}^i}$, has $n_i^2$ elements 
\[
%\{\hat{M}^i_{\mu_i}\}
\hat{\bm{M}^i}
=\{\bigotimes_{ j\neq i}\hat{I}^j \otimes A^i_1,\cdots,\bigotimes_{ j\neq i}\hat{I}^j \otimes A^i_{n_i^2}\}, 
\]
{where the $A_\alpha^i$ constitute the members of the Lie algebra of $U(n_i)$. }
Importantly, we impose the following Lie algebra orthonormality condition~\cite{PhysRevLett.99.130504},
\bea
{\tr(\hat{A^i_\alpha}\hat{A^i_\beta})=\delta_{\alpha\beta} \,\quad \forall\, i}
\label{Acond}
\ea
on the operators contained in the set of observables, where the $\alpha,\beta$ indices run over
$1,\ldots,n_i^2$. The orthonormality condition is imposed to be able to 
render the norm of the covariance matrices in Eq.~\eqref{CovDef} below 
independent of the observables set $\hat{\bm M}$, then leaving only state dependence of the operator norm.

The total set of observables  is the union of the local cell observables as follows 
\bea
\hat{\bm M}=\bigcup_{i=1}^N \hat{\bm M^i}=\{\hat{M}_\mu\}
%=\{\hat{M}^1_{\mu_1},\cdots,\hat{M}^N_{\mu_N}\}, 
\label{M}
\ea
The covariance matrix $\gamma(\R,\hat{\bm M})$ is then defined as
the symmetrized correlation function,
\bea
\gamma(\R,\hat{\bm M})_{\mu\nu}=\tfrac{1}{2}\mean{\hat{M_\mu} \hat{M_\nu}+\hat{M_\nu} \hat{M_\mu}}
-\mean{\hat{M_\mu}}\mean{\hat{M_\nu}}, \label{CovDef}
\ea
which is a symmetric, real, and positive semidefinite matrix. \
Here, indices $\mu,\nu$ %(see also below) 
run over both $i$ and $\alpha$ indices.

{We make use of the following properties of the 
 covariance matrix $\gamma$. First, the eigenvalues of $\gamma(\R,\hat{\bm M})$ are independent of the observable set $\hat{\bm M}$ when the orthonormality 
condition \eqref{Acond} is met. 
Another property  is that the variance %\col{standard deviation?} 
of $\hamz$ obeys an inequality involving $\gamma$. 
To establish that inequality, 
we use that the Hamiltonian can be written as a linear decomposition using the operator set 
$\hat{\bm M}$ \cite{PhysRevLett.99.130504}. Employing a normalized real vector, ${\bm u}$, 
%\emph{i.e.}, 
$\hamz=\hat{\bm  M}\cdot {\bm u}\sqrt{\sum_{i=1}^N\tr(\hamz_i^2)}$, we have  }
\[
\begin{split}
\mean{\Delta\hamz^2}&={\bm u}
^\mathrm{T}\gamma(\R,\hat{\bm M}){\bm u}
\sum_{i=1}^N\tr(\hamz_i^2)
\\&
\leq\|\gamma(\R,\hat{\bm M})\|\sum_{i=1}^N\tr(\hamz_i^2),
\end{split}
\]
from which we can infer the bound on the charging power by using \eqref{uncertanty},
\bea
\label{power_uncertanty}
P\leq 2\sqrt{\|\gamma(\R,\hat{\bm M})\|\sum_{i=1}^N\tr(\hamz_i^2)\mean{\Delta \hamd^2}},
\ea
where the operator norm $\|\boldsymbol{\cdot}\|$ of a Hermitian operator 
 is its largest eigenvalue \cite{Hassani}. 
 In addition, $\|\gamma(\R,\hat{\bm M})\|$ is bounded by $\frac{N}{2}$ and is equal to $\frac{1}{2}$ for a {product state,} $\ket{\psi}=\bigotimes_{i=1}^{N} \ket{\psi_i}$. Finally, $\|\gamma(\R,\hat{\bm M})\|$ is conserved by local unitary evolution~\cite{Salamon}.

\subsection{Issues with the covariance matrix approach}
From the fact that $\|\gamma(\R,\hat{\bm M})\|$ %is conserved by local unitary evolution and, 
%as a stronger constraint,  
is invariant for any {pure product state,} one tends to infer that $\|\gamma(\R,\hat{\bm M})\|$ represents
{a suitable measure to assess} to which extent  entanglement contributes to the {power bound.} 
Indeed entanglement contributes for a pure state, but however not generally when one is dealing with a mixed state.

In particular, mixed states 
%{semiclassical correlations (
%{quantum correlations beyond entanglement, measured for  example by quantum discord \cite{Zurek,Adesso_2016}} 
can increase $\|\gamma(\R,\hat{\bm M})\|$ although they do not enlarge the bound on power.
 To illustrate this with a concrete example, the %{semclassically} 
mixed state $\R=(\ket{00}\bra{00}+\ket{11}\bra{11})/2$ has $\|\gamma(\R,\hat{\bm M})\|=1$ larger than $\frac12$ for {simple product states.} 
This mixed state thus seems to indicate a factor two advantage over {product states.} However, 
this is not the case, which can be seen as follows. %we can not make use  of this advantage due to the following reason. 

By decomposing $\R$ into the normalized sum of $\R_1=\ket{00}\bra{00}$ and $\R_2=\ket{11}\bra{11}$, we rewrite Eq.~\eqref{uncertanty} as
\[
P\leq \frac{1}{2}(\tr([-i[\hamz,\hamd],\R_1])+\tr([-i[\hamz,\hamd],\R_2])). 
\]
{
This yields, using \eqref{power_uncertanty}, 
\[
    P\leq 
    %\left(
   \sum_{l=1,2} \sqrt{\|\gamma(\R_l,\hat{\bm M})\|
    \sum_{i=1}^N\tr(\hamz_i^2)\mean{\Delta\hamd^2}_l},
\]
where $\mean{\Delta\hamd^2}_l=\tr(\hamd^2\R_l)-\tr(\hamd\R)^2$. Since
\[
\frac12 \left(\sqrt{\mean{\Delta \hat V^2}_1}+\sqrt{\mean{\Delta \hat V^2}_2}\right)\leq \sqrt{\mean{\Delta \hat V^2}}
\]
and both $\|\gamma(\R_l,\hat{\bm M})\|$ 
are equal to $\frac12$, we obtain finally that
\[ 
    P\leq\sqrt{2\sum_{i=1}^N\tr(\hamz_i^2)\tr(\Delta \hamd^2)}, 
\]
{identical to the product state bound, thus leading to a tighter bound than that in \eqref{power_uncertanty} when one sets $\|\gamma(\R,\hat{\bm M})\|=1$ for the non-decomposed mixed state therein. This 
provides evidence that the covariance matrix $\gamma$ 
%\col{Thus a contradiction has been shown? Decomposition dependence shows that $\gamma$
also contains classical correlations.}} 

We show below that, generally, there is no quantum state advantage from % {quantum correlations beyond entanglement %{semiclassical} 
%from sums of product states} 
separable states over {product states.} %As a {corollary}, from the fact that m
{Mixed states can increase $\|\gamma(\R,\hat{\bm M})\|$,  however 
% \col{consequence}, 
 inequality~\eqref{power_uncertanty} is thereby rendered a loose  bound 
 and the inequality on the power cannot be saturated.}

%Moreover, $\tr(\hamd^2)$ in \eqref{power_uncertanty} is not directly relative to QSL since all of the eigenvalues of $\hamd$ contribute for $\tr(\hamd^2)$ but QSL is only determined by projected element by state. Increasing the dimension of Hilbert space of $\hamd$, it is hard to isolate quantum speed advantage from \eqref{power_uncertanty}.  The power of the quantum battery is highly related to QSL, which means the time evolution speed of the state in the Hilbert space. The quantum advantage we want to consider is how to use an efficient way from the ground state to the excited state. To remove the effect from evolution speed, We need to impose a constraint on driving Hamiltonian. However, even if we impose a constraint of the quantum speed limit, $\hamd$ is not bounded. 

In summary, the inequality \eqref{power_uncertanty} only furnishes a loose, non-saturable bound on the 
charging power, i{n which, in particular, the impact of  a quantum speed limit 
{(which does not give a directly entanglement-related 
factor in the power bound)} % as explained in the introduction)} 
is not manifest.
To address these issues, let us define the what we coin the \emph{commutation matrix}.

\section{commutation matrix} 
\subsection{Definition}
{To eliminate the above discussed impact of classical correlations,} 
we define the {\em commutation matrix} $\gamma_C(\R,\hat{\bm M})$ as follows 
\bea
\gamma_C(\R,\hat{\bm M})_{\mu\nu}\coloneqq-\tr([\hat{M_\mu},\sqrt{\R}][\hat{M_\nu},\sqrt{\R}]),
\label{defineGP}
\ea
for the same orthonormalized observable set $\hat{\bm M}$ 
displayed in Eq.~\eqref{M}, used already for the covariance matrix. Since $\R$ is positive semidefinite, $\sqrt{\R}$ is well-defined as a positive semidefinite matrix. The matrix 
$\gamma_C(\R,\hat{\bm M})$ has similar properties as $\gamma(\R,\hat{\bm M})$: Also 
$\gamma_C(\R,\hat{\bm M})$ is positive semidefinite and its eigenvalues 
only depend on the quantum state for all observables in the set 
which meet the orthonormality condition~\eqref{Acond}. 
We derive these and other properties of the commutation matrix in Appendix \ref{propcommmatrix}. 

\subsection{Power bound}
Crucial for our argument in the following is the property that  %due to the normalization condition \eqref{Acond}, 
$\|\gamma_C(\R,\hat{\bm M})\|$ is only a function of the quantum state encoded in the density matrix, $\R$.
We denote this function as % the following 
\begin{equation}
\Gamma_C(\R)\coloneqq \|\gamma_C(\R,\hat{\bm M})\|. \label{defC}
\end{equation} 
%By replacing $\|\gamma(\R,\hat{\bm M})\|$ with 
{Using this definition of $\Gamma_C(\R)$ in terms of our commutation matrix, 
we now obtain a tighter bound than 
Eq.~\eqref{power_uncertanty}:} 
\begin{theorem}\label{theorem_main}
The instantaneous power of the quantum battery is bounded as follows 
\[\label{inequal_power}
P\leq \sqrt{2\kappa\Gamma_C(\R)\sum_{i=1}^N\tr(\hamz_i^2)\mean{\Delta\hamd^2}},
\] 
{where the coefficient $\kappa$ lies within the range $1\leq \kappa\leq 2$ (see Appendix \ref{proof_power}), 
and we neglect a possible weak dependence of $\kappa$ on $\R$ and $\hamd$ for mixed states; for any pure state,  $\kappa=1$.} 
Equality and thus saturation of the bound \eqref{inequal_power} 
is met when the two conditions 
\bea
&&\{\sqrt{\R},\hamd\}=i[\hamz,\sqrt{\R}]\sqrt{-\frac{\tr(\{\sqrt{\R},\hamd\}^2)}{\tr([\hamz,\sqrt{\R}]^2)}}
%\quad\mathrm{and}
\label{eq_con_1}\\
&&\hamz=\hat{\bm M}\cdot {\bm u}
\sqrt{\sum_{i=1}^N\tr(\hamz_i^2)}\label{eq_con_2} 
\ea
are fulfilled.
Here, ${\bm u}
$ is the normalized eigenvector of $\gamma_C(\R,\hat{\bm M})$ which has the eigenvalue  $\Gamma_C(\R)$. The theorem on the power bound can be proved as follows. 
\end{theorem}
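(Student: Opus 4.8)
The plan is to massage the instantaneous power into a Hilbert--Schmidt inner product of two operators built from $\sqrt{\R}$, apply the Cauchy--Schwarz inequality, and then recognize the two resulting factors as the commutation-matrix eigenvalue $\Gamma_C(\R)$ and the driving variance $\mean{\Delta\hamd^2}$, respectively. First I would establish the algebraic identity
\[
P=i\,\tr\!\left([\hamz,\sqrt{\R}]\,\{\hamd,\sqrt{\R}\}\right),
\]
which follows by writing $\R=\sqrt{\R}\sqrt{\R}$, expanding commutator and anticommutator, and repeatedly using cyclicity of the trace: the two ``mixed'' contributions $\tr(\hamz\sqrt{\R}\hamd\sqrt{\R})$ cancel, while the remaining terms reassemble into $\tr(\hamz[\R,\hamd])$, reproducing Eq.~\eqref{eq:power}. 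The payoff of this rewriting is that $A\coloneqq i[\hamz,\sqrt{\R}]$ and $B\coloneqq\{\hamd,\sqrt{\R}\}$ are both Hermitian, so $P=\tr(AB)$ is a real inner product. Cauchy--Schwarz in the Hilbert--Schmidt norm then yields
\[
P\leq\sqrt{\tr(A^2)\,\tr(B^2)}=\sqrt{-\tr([\hamz,\sqrt{\R}]^2)\,\tr(\{\hamd,\sqrt{\R}\}^2)},
\]
with equality iff $B=\lambda A$ for a scalar $\lambda\geq0$; solving for $\lambda$ by taking the trace of the square gives precisely the saturation condition~\eqref{eq_con_1}.

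Next I would evaluate the two factors separately. For the first, inserting the decomposition~\eqref{eq_con_2}, $\hamz=\hat{\bm M}\cdot{\bm u}\sqrt{\sum_i\tr(\hamz_i^2)}$, turns $-\tr([\hamz,\sqrt{\R}]^2)$ into $\sum_i\tr(\hamz_i^2)\,{\bm u}^{\mathrm T}\gamma_C(\R,\hat{\bm M}){\bm u}$ by the very definition~\eqref{defineGP} of the commutation matrix; bounding this quadratic form by the largest eigenvalue gives $\leq\Gamma_C(\R)\sum_i\tr(\hamz_i^2)$, saturated exactly when ${\bm u}$ is the eigenvector of $\gamma_C$ with eigenvalue $\Gamma_C(\R)$, i.e.\ condition~\eqref{eq_con_2}. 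For the second factor I would expand $\tr(\{\hamd,\sqrt{\R}\}^2)=2\tr(\hamd^2\R)+2\tr(\hamd\sqrt{\R}\hamd\sqrt{\R})$ and, exploiting that $P$ is invariant under $\hamd\to\hamd+c\I$, choose the shift with $\tr(\hamd\R)=0$ so that $\tr(\hamd^2\R)=\mean{\Delta\hamd^2}$. Passing to the eigenbasis $\R=\sum_k p_k\pro{k}{k}$ gives $\tr(\hamd\sqrt{\R}\hamd\sqrt{\R})=\sum_{kl}\sqrt{p_kp_l}\,|\hamd_{kl}|^2$ and $\tr(\hamd^2\R)=\tfrac12\sum_{kl}(p_k+p_l)|\hamd_{kl}|^2$, so that defining $\kappa$ through $\tr(\{\hamd,\sqrt{\R}\}^2)=2\kappa\mean{\Delta\hamd^2}$ yields $\kappa=1+\big(\sum_{kl}\sqrt{p_kp_l}|\hamd_{kl}|^2\big)/\big(\sum_{kl}p_k|\hamd_{kl}|^2\big)$. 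Combining the two factors reproduces Eq.~\eqref{inequal_power} exactly.

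The hard part will be controlling $\kappa$. Nonnegativity $\sqrt{p_kp_l}\geq0$ together with the arithmetic--geometric mean inequality $\sqrt{p_kp_l}\leq\tfrac12(p_k+p_l)$ immediately confines the ratio above between $0$ and $1$, hence $1\leq\kappa\leq2$. For a pure state only one $p_k$ is nonzero, the numerator collapses to $\tr(\hamd\R)^2=0$ after the shift, and $\kappa=1$ exactly, recovering the known pure-state bound. The residual subtlety, which I would flag rather than fully resolve, is that $\kappa$ is not a universal constant but depends weakly on the spectrum $\{p_k\}$ and on the optimal driving matrix elements $\hamd_{kl}$: the joint saturation requirements~\eqref{eq_con_1} and~\eqref{eq_con_2} fix $\hamd$ and hence $\kappa$ self-consistently, so for mixed states the bound is tight only up to this $\R$- and $\hamd$-dependent factor lying in $[1,2]$, exactly the caveat stated in the theorem.
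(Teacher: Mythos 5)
Your proposal is correct and follows essentially the same route as the paper's proof: the same rewriting $P=\tr\bigl(i[\hamz,\sqrt{\R}]\{\sqrt{\R},\hamd\}\bigr)$, the same Cauchy--Schwarz split into the two factors $-\tr([\hamz,\sqrt{\R}]^2)$ and $\tr(\{\sqrt{\R},\hamd\}^2)$, the same eigenvalue bound via the decomposition $\hamz=\hat{\bm M}\cdot{\bm u}\sqrt{\sum_i\tr(\hamz_i^2)}$, and the same shift $\hamd\to\hamd-\tr(\hamd\R)\hat{I}$ before controlling $\kappa$. Your eigenbasis computation giving the explicit formula $\kappa=1+\bigl(\sum_{kl}\sqrt{p_kp_l}|\hamd_{kl}|^2\bigr)/\bigl(\sum_{kl}p_k|\hamd_{kl}|^2\bigr)$ is just a termwise restatement of the paper's trace inequalities $\tr(\R\hamd)^2\leq\tr(\sqrt{\R}\hamd\sqrt{\R}\hamd)\leq\tr(\R\hamd^2)$ (your AM--GM step is exactly $-\tr([\hamd,\sqrt{\R}]^2)\geq 0$), though it does make the pure-state case $\kappa=1$ and the $\R$-, $\hamd$-dependence of $\kappa$ pleasantly explicit.
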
 
\begin{proof}
By the Cauchy-Schwarz inequality, we have 
\begin{multline}
\label{power cauchy ineq}
P(t)=\tr(i[\hamz,\sqrt{\R}]\{\sqrt{\R},\hamd\})\\
\leq\sqrt{-\tr([\hamz,\sqrt{\R}]^2)\tr(\{\sqrt{\R},\hamd\}^2)}
\end{multline}
from which the first equality condition~\eqref{eq_con_1} derives. Moreover, we can always choose a $\hamd$ which satisfies \eqref{eq_con_1}. A detailed method to choose an optimal $\hamd$ is provided by Eq.~\eqref{eigen_v_ij}  
in Appendix \ref{proof_power}.
%\label{eigen_v_ij}. 
%\colb{Because $\hamz$ is a sum of local Hamiltonians, we have argued in the above that 
%it can be decomposed as $\hat H = \sqrt{\sum_{i=1}^N \tr(\hamz^2_i)}\hat{\bm M}\cdot {\bm u}$.
 %for some real vector ${\bm u}$. 

%and the fact that $|{\bm u}|^2=\sum_{i=1}^N\tr(\hamz_i^2)$.

We now establish how $2\kappa\mean{\Delta\hamd^2}$ derives from $\tr(\{\sqrt{\R},\hamd\}^2)$.
{First of all, we obtain, by using the fact that $\Gamma_C$ is the largest eigenvalue 
% equality~\eqref{eq_con_2} on the right-hand side and the definition 
of $\gamma_C(\R,\hat{\bm M})$ 
 [cf.~Eq.~\eqref{fyou}] % in the Appendix]
\[-\tr([\hamz,\sqrt{\R}]^2)\leq\Gamma_C(\R) \sum_{i=1}^N\tr(\hamz_i^2)
\label{ineqGamma}. \]}\noindent Since adding real multiples of the identity operator to $\hamd$ does not contribute to the power, the bound on the power in Eq.~\eqref{power cauchy ineq} can be rewritten as follows (with real $\lambda$): 
\[\label{inequal_power_lambda}
    P\leq \min_\lambda \sqrt{\Gamma_C(\R)\sum_{i=1}^N\tr(\hamz_i^2)\tr(\{\sqrt{\R},\hamd+\lambda\hat{I}\}^2)}
\]
The term $\tr(\{\sqrt{\R},\hamd+\lambda\hat{I}\}^2)$ on the right-hand side of \eqref{inequal_power_lambda} is minimized by $2\tr(\R\hamd^2)+2\tr(\sqrt{\R}\hamd\sqrt{\R}\hamd)-4\tr(\R\hamd)^2$ when $\lambda$ is equal to $-\tr(\hamd \R)$. By the inequality $\tr(\R\hamd)^2\leq\tr(\sqrt{\R}\hamd\sqrt{\R}\hamd)\leq\tr(\R\hamd^2)$, we can replace $\tr(\{\sqrt{\R},\hamd\}^2)$ by $2\kappa\mean{\Delta\hamd^2}$, for $1\leq \kappa\leq 2$. 
The latter inequality is derived by using that 
$-\tr([\hamd,\sqrt{\R}]^2)\geq0$ and $\tr(\hamd\sqrt{\R}\hamd\sqrt{\R})\geq0$.  
This completes the proof of the inequality \eqref{inequal_power}; we provide further details in Appendix \ref{proof_power}. 
\end{proof}

\section{Properties of derived power bound}
 \subsection{Isolating the impact of quantum speed limit}
To compare driving Hamiltonians composed, e.g., only of local or global battery operations 
(or a combination of these), and their impact on the power bound, 
we  first explain the form of the QSL constraint we impose.
It directly derives from the time-energy uncertainty relation~\cite{Deffner_2017}.
{The standard deviation 
 of the driving Hamiltonian, %\colb{$\Delta E\propto N$ 
 %is equal to $\sqrt{N}\Delta E_{\mathrm{single}}$ 
 for all possible drivings $\hamd$, yields an 
 effective driving gap $\Delta E$. The latter is given by  
 the time average of the standard deviation of the charging operator as follows 
  (called the constraint $C_1$ in Ref.~\cite{PhysRevLett.118.150601}) 
\bea
\label{condition_QSL}
\Delta E = \int_0^T \frac{\sqrt{\mean{\Delta\hamd^2}}}{T} dt = \Delta E_{\mathrm{single}} % \mathrm{constant'}
\times \sqrt{N},
\ea 
where we assume that the single-cell gap  $\Delta E_{\mathrm{single}}$ %constant$'$ 
does not depend on $\hamd$.
The above relation restricts the mean of the speed of time evolution of states in Hilbert space by 
imposing a finite gap $\Delta E$. }

We impose in the following the stronger constraint 
\bea
\sqrt{\mean{\Delta\hamd^2}}=\mathrm{constant}\times \sqrt{N}, 
\label{stronger_condition_QSL}
\ea 
which restricts the {\em instantaneous} rather than just the global speed of time evolution as specified by \eqref{condition_QSL}. 
  {The constant contained here equals $\Delta E_{\mathrm{single}} $}%the constant$'$ 
contained in \eqref{condition_QSL} when the condition \eqref{stronger_condition_QSL} is imposed.

{Note the constraint \eqref{stronger_condition_QSL}
restricts the evolution speed of states in the Hilbert space {\em independent} from entanglement, due to the fact that the choice of the driving Hamiltonian is free and the quantum state resides in the full  {corresponding} 
Hilbert space when determining the QSL. 
That is, any kind of bi- or multi-partition of the given state is not allowed to affect the QSL.  
Therefore, and importantly for our present argument, isolating the QSL in the inequality
for the power does not impact assessing the influence of entanglement 
contained in the state $\R$ on the maximum power obtainable.}

%or not so extraction of the QSL contribution to the power isn't a problem to isolate the contribution of entanglement.}\col{I do not get this last sentence, write more clearly} 
\subsection{Scaling with cell number}
{Under the condition \eqref{stronger_condition_QSL}, both $\sum_{i=1}^N \tr(\hamz_i^2)$ and $\langle\Delta \hamd^2\rangle$ have linear scaling with $N$, so that $\sqrt{\sum_{i=1}^N \tr(\hamz_i^2)\langle\Delta \hamd^2\rangle}$ also has linear scaling with $N$, which is equivalent to classical scaling for the power without quantum state advantage.
}

{When the energy gap and quantum speed limit are given, $\Gamma_C(\R)$ {therefore} indicates a true quantum advantage,  
related to a geometric advantage in the Hilbert space {related to} entanglement, 
which is, in particular, isolated from the QSL.

Entanglement is in fact necessary (but not sufficient, also see below) to exceed the classical linear 
scaling of the power in the number of cells $N$. One can see this as follows. 
As the quantum state advantage can be calculated by $\Gamma_C(\R)$, we can anticipate the maximum power of a quantum battery. To that end, we use that 
$\sqrt{\Gamma_C(\R)}$ is bounded by $\sqrt{N}$ (Proposition 2 in Appendix \ref{propcommmatrix}), 
and is less than or equal to unity for separable states which {by definition} 
do not have entanglement. This indicates that the maximum quantum state advantage 
 is $\sqrt{N}$ and there is no quantum advantage without entanglement.
This is in agreement with the result for the maximum quantum advantage obtained by \cite{PhysRevLett.118.150601}. 
We can thus indeed quantify the genuine quantum state 
advantage for all states by  $\sqrt{\Gamma_C(\R)}$.

\subsection{Achievability of bound}
The upper bound contained in \eqref{inequal_power} is achievable for all kinds of states by manipulating $\hamz$ and $\hamd$. %To rephrase it, it indicates the maximum power we can extract from the state. 
On the other hand,  the power bound can not be saturated for {\em given} $\hamd$ and $\hamz$. 
%We can separate each decreasing effect from $\hamd$ and $\hamz$ from each other.  
One of the factors leading to a nonsaturated bound is a deviation of $\hamd$ from the optimal driving Hamiltonian: 
%Either tilt from optimal angle or restricted condition of $\hamd$ 
Imposing a restriction on $\hamd$ 
leads to a reduced power bound, which can be quantified by 
\[\label{eq:theta_H}
\cos{\theta_{\hamd}}=-\frac{\tr([i[\hamz,\sqrt{\R}]\{\sqrt{\R},\hamd\})^2}{\tr([\hamz,\sqrt{\R}]^2)\tr(\{\sqrt{\R},\hamd\}^2)}. 
\] 
The structure of $\hamz$ also affects the power bound, since commonly $\hamz$ is assumed to be 
fixed and not controllable, in distinction to the driving Hamiltonian $\hamd$. 
This effect is expressed by another angle 
\[\label{eq:theta_V}
\cos{\theta_{\hamz}}=\frac{-\tr([\hamz,\sqrt{\R}]^2)}{\Gamma_C(\R)\sum_{i=1}^N\tr(\hamz_i^2)}.
\]

The two angles defined as in the above yield an equation for the power, cf.~Appendix.~\ref{Ap:eq_power}, 
\[\label{power_eq} 
P= \sqrt{2\kappa\Gamma_C(\R)\sum_{i=1}^N\tr(\hamz_i^2)\mean{\Delta\hamd^2}\cos{\theta_{\hamd}}\cos{\theta_{\hamz}}}.   
\]
From %\eqref{power_eq} 
this expression, 
we note that the driving Hamiltonian $\hamd$ % has two roles to get the quantum advantage. 
should increase $\Gamma_C$ during time evolution. 
 Driving Hamiltonians constructed from simple sums of local operators 
 can however not increase entanglement, as dictated by the condition 
 that every conceivable entanglement measure should not increase by local operations and classical communication (LOCC) \cite{Nielsen,LOCC,Guifre}.
  %\col{citation?},  
This suggests that states evolved by such  driving Hamiltonians 
 starting from the ground state can not have $\Gamma_C$ exceeding unity:  
There is by definition no quantum state advantage from sums of local  driving Hamiltonians. 
%Secondly, to saturate the bound of power, to saturate the maximally possible 
%quantum advantage, $\cos (\theta_{\hamd})$ has to have a large value. 
%From Eq.~\eqref{power_eq}, the optimal $\hamd$ is always the global operator for $\cos (\theta_{\hamd})=1$, which follows the conclusion of \cite{PhysRevLett.128.140501}.

%\col{Put plot of scaling behaviors of QSL and operator norm 
%of charging Hamiltonian. Potential result: Can eliminate state dependence from QSL at a small price} 

\section{Examples for quantum state advantage and disadvantage}
We study below in detail the properties of $\Gamma_C$ for several examples. 
%The most crucial fact about $\Gamma_C$ is it is not able one to one mapping with entanglement. LOCC is most generally considered as a criterion of entanglement in the sense that entanglement can not increase by LOCC which is called as entanglement monotone. 
We will find that $\Gamma_C$ is not an entanglement monotone. This fact can lead to a %potentially 
counterintuitive behavior of the (supposedly)  quantum origin of charging power.

To quantify the entanglement in a given state $\R$, we employ the negativity, demonstrated to yield an
entanglement monotone by \cite{VidalWerner}. 
For a bipartite system composed, say, of partitions $A$ and $B$, 
the definition of negativity, ${\mathcal N}(\R)$, employs
 $\R^\mathrm{T_A}$, the partial transpose operation 
 on an $A$-dimensional part of the full density matrix $\R$, 
\[
{\mathcal N}=\frac{\|\R^\mathrm{T_A}\|_{\mathrm{Tr}}-1}{2}. 
\]
 Here, $\| \R^\mathrm{T_A}\|_{\mathrm{Tr}}$ is the trace norm, which % with respect to partition A, which yields 
is the sum of the absolute values of the eigenvalues of $\R^\mathrm{T_A}$.

%\col{Partial transpose yields...}

\subsection{Initial GHZ state and thermalization} \label{GHZevol}

We first argue that generally the quantum state advantage $\Gamma_C(\R)$ satisfies 
the following
 \begin{theorem}
 {$\Gamma_C$ is not an entanglement monotone.}\label{theorem2}
 \end{theorem}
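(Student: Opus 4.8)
The plan is to prove Theorem \ref{theorem2} by explicit counterexample, exploiting the defining property of an entanglement monotone that it may not increase under local operations (and classical communication). Independent thermalization of the individual cells is a product completely positive trace‑preserving map, hence a special case of LOCC, so it suffices to exhibit an initial entangled state $\R_0$ together with such a local evolution under which $\Gamma_C(\R_t)$ strictly increases over some time interval. A single instance of a local map raising $\Gamma_C$ is already incompatible with monotonicity; that the bipartite negativity $\mathcal{N}(\R_t)$ simultaneously decreases (as it must, being a genuine monotone under a local map) then confirms that entanglement is being destroyed while $\Gamma_C$ grows.

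Concretely, I would take $\R_0=\pro{\mathrm{GHZ}}{\mathrm{GHZ}}$ with $\ket{\mathrm{GHZ}}=\tfrac{1}{\sqrt2}(\ket{0\cdots0}+\ket{1\cdots1})$ and let each cell relax independently toward its local Gibbs state under a product Lindbladian $\mathcal{L}=\sum_{i=1}^N\mathcal{L}_i$. Since $\mathcal{L}$ contains no inter‑cell coupling, the evolution is local and cannot increase entanglement, so $\mathcal{N}(\R_t)$ (across any fixed bipartition) is non‑increasing and tends to zero as the state becomes separable. The decisive structural feature is that under this dynamics $\R_t$ remains supported, apart from populations, on the single coherence linking $\ket{0\cdots0}$ and $\ket{1\cdots1}$, i.e.\ it keeps an X‑state--like block form. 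This is exactly what renders the otherwise intractable square root $\sqrt{\R_t}$ entering the commutation matrix \eqref{defineGP} available in closed form.

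With $\sqrt{\R_t}$ in hand, the next step is to assemble $\gamma_C(\R_t,\hat{\bm M})$ from the orthonormal local operator set \eqref{M} and extract its largest eigenvalue $\Gamma_C(\R_t)=\|\gamma_C(\R_t,\hat{\bm M})\|$. At $t=0$ the pure‑state simplification $\sqrt{\R_0}=\R_0$ reduces every matrix element to an expectation value in $\ket{\mathrm{GHZ}}$, fixing the reference value $\Gamma_C(\R_0)$; for $t>0$ the closed‑form root yields $\Gamma_C(\R_t)$ as an explicit function of the damping parameters. The objective is then to establish $\left.\tfrac{d}{dt}\Gamma_C(\R_t)\right|_{t=0^+}>0$, or equivalently to produce a time $t_\ast$ with $\Gamma_C(\R_{t_\ast})>\Gamma_C(\R_0)$ while $\mathcal{N}(\R_{t_\ast})<\mathcal{N}(\R_0)$. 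As this increase of $\Gamma_C$ is generated by a purely local channel, it contradicts the LOCC‑monotonicity required of any entanglement measure, which proves the theorem.

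I expect the main obstacle to be the square root of the mixed state and the subsequent diagonalization of the commutation matrix: for a generic $\R_t$ neither $\sqrt{\R_t}$ nor the spectrum of $\gamma_C$ is accessible analytically. The choice of a GHZ initial state together with independent local thermalization is dictated precisely by this difficulty, since it confines $\R_t$ to a low‑dimensional effective block on which $\sqrt{\R_t}$, and hence every commutator $[\hat{M}_\mu,\sqrt{\R_t}]$ and trace in \eqref{defineGP}, can be evaluated in closed form. A secondary subtlety is to verify that the eigenvector carrying the largest eigenvalue of $\gamma_C$ does not cross a competing eigenvalue over the relevant interval, so that $\Gamma_C(\R_t)$ stays smooth and the sign of its variation is unambiguous.
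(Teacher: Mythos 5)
Your logical skeleton is sound: a product channel (independent local thermalization) is a special case of LOCC, so a single instance of such a map strictly increasing $\Gamma_C$ disproves monotonicity, and a strict increase over \emph{any} time interval suffices because Markovian divisibility makes the map $\R_{t_1}\mapsto\R_{t_2}$ itself a local channel. However, your concrete implementation has a genuine flaw: for the $N$-qubit GHZ state $\tfrac{1}{\sqrt2}(\ket{0\cdots0}+\ket{1\cdots1})$ one has $\Gamma_C(\R_0)=N$ (for pure states $\gamma_C=2\gamma$, and GHZ saturates $\|\gamma\|=N/2$), which is the \emph{global maximum} of $\Gamma_C$ by Proposition~\ref{pro:max_G}. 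Hence both of your stated success criteria --- $\tfrac{d}{dt}\Gamma_C(\R_t)\big|_{t=0^+}>0$ and the existence of $t_\ast$ with $\Gamma_C(\R_{t_\ast})>\Gamma_C(\R_0)$ --- are impossible for this initial state; indeed the paper's own simulations (Fig.~\ref{fig:figure2}, $D=2$) show $\Gamma_C$ falling from $2$ and saturating at $1<2$. To rescue your route you must either (i) invoke only the weaker interval criterion at the turning point, which then requires an \emph{analytic} proof of the non-monotone dip-and-rise of $\Gamma_C(\R_t)$ (the paper establishes this only numerically, and precisely the difficulties you flag --- the mixed-state square root and possible eigenvalue crossings of $\gamma_C$ --- make this laborious), or (ii) switch to two-qudit generalized GHZ states with $D\geq 5$, where $\Gamma_C(\R_0)=4/D<1$ lies \emph{below} the asymptotic value, so the final thermal state itself beats the initial one.

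The paper's actual proof sidesteps thermalization entirely and is far more elementary: it takes the pure qutrit state $\ket{\psi_i}=\tfrac{1}{\sqrt3}(\ket{00}+\ket{11}+\ket{22})$ with $\Gamma_C=4/3$, which is deterministically convertible by LOCC (Nielsen majorization, since $(1/3,1/3,1/3)\prec(1/2,1/2,0)$) to $\ket{\psi_f}=\tfrac{1}{\sqrt2}(\ket{00}+\ket{22})$ with $\Gamma_C=2$. Both values follow from the pure-state identity $\sqrt{\R}=\R$, so no mixed-state square root or spectral-crossing analysis is needed. Your approach, once repaired as above, would buy a physically richer statement --- non-monotonicity under a concrete dissipative dynamics rather than an abstract LOCC conversion --- but as written the counterexample you specify cannot exhibit the increase you claim.
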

 \begin{proof}
 We prove the theorem by a counterexample. 
In particular, LOCC operations can lead to an increase of $\Gamma_C$. Suppose the initial state is prepared to be  $\ket{\psi_i}=\frac{1}{\sqrt{3}}(\ket{00}+\ket{11}+\ket{22})$, which has $\Gamma_C(\ket{\psi_i}\bra{\psi_i})=4/3$. It evolves by LOCC to the final state, a GHZ state 
 $\ket{\psi_f}=\frac{1}{\sqrt{2}}(\ket{00}+\ket{22})$ which has $\Gamma_C(\ket{\psi_f}\bra{\psi_f})=2$, that is 
 larger than $\Gamma_C$ than the initial state. 
 %This final state can be obtained by a LOCC operation from the initial state, \emph{i.e.}, f
 Formally, $\Gamma_C(\ket{\psi_i}\bra{\psi_i})<\Gamma_C(\mathrm{LOCC}(\ket{\psi_i}\bra{\psi_i}))$.
 \end{proof}
 
The entanglement non-monotonicity of $\Gamma_C$ will now be shown to yield a counterintuitive result: Increasing the quantum state advantage during losing entanglement by thermalization. To explicitly describe this phenomenon, we assume a Lindbladian evolution~\cite{10.1093/acprof:oso/9780199213900.001.0001}.   

To simplify the Lindblad equation for our purpose, we assume that each battery cell interacts with its own thermal bath, and the individual baths are uncorrelated among each other, to ensure that spurious entanglement breaking between the battery qudits generated by coupling to a common bath is not taking place. Finally, we assume that the qudit cells do not interact with each other. Under these assumptions, the Lindbladian evolution of the system is equivalent to that of a single cell in a given bath.
Putting a possible Lamb-type shift to zero, we have, for two cells 
%, Zhang2022photonphoton}, see for more details 
%on the Lindbladian and its parameters Appendix \ref{Lindbladdetails}
\begin{multline}
%\begin{split}
\label{eq_Lind}
    \frac{\partial\R}{\partial t}=i[\R,\hamz]  %+\hat{H}_p] %\\
    +g\sum_{i=1,2} \left\{ (N_p(\omega_0)(\hat{a}_i^\dagger\R\hat{a}_i-\{\R,\hat{a}_i\hat{a}_i^{\dagger}\}/2) \right.  \\
    \left. +(N_p(\omega_0)+1)(\hat{a}_i\R\hat{a}_i^\dagger-\{\R,\hat{a}_i^{\dagger}\hat{a}_i\}/2)\right\},
%\end{split} 
\end{multline}
where the coefficient $g$ is determined by the strength of %a dipole-photon 
interaction between the qudit cell and the bath. %$D$ level atoms and the electromagnetic bath. 
%We assume that each qudit interacts with their corresponding baths independently, 
%and that there is no (scattering) interaction between the qudits.
%We assume the baths to be photonic in origin, where 
Assuming an electromagnetic bath, the number of photons with energy $\omega$ is given by $N_p(\omega)=e^{-\beta \omega}/(1-e^{-\beta \omega})$. The battery Hamiltonian, which reads 
$\hamz=\hat{I}\otimes \sum_{d=1}^D d \omega_0\ket{d}\bra{d}+\sum_{d=1}^D d \omega_0\ket{d}\bra{d} \otimes \hat{I}$, is composed of decoupled cells with a uniform energy gap, $\omega_0$, and  $D$ is the qudit dimension of the cell. 
%Finally, $\hat{a}_\alpha(\)=\hat{a}\bigotimes_{i\neq \alpha} \hat{I}_i$ are annihilation operators for each local cell.  

 \begin{figure}[t!]
    \includegraphics[width=\hsize]{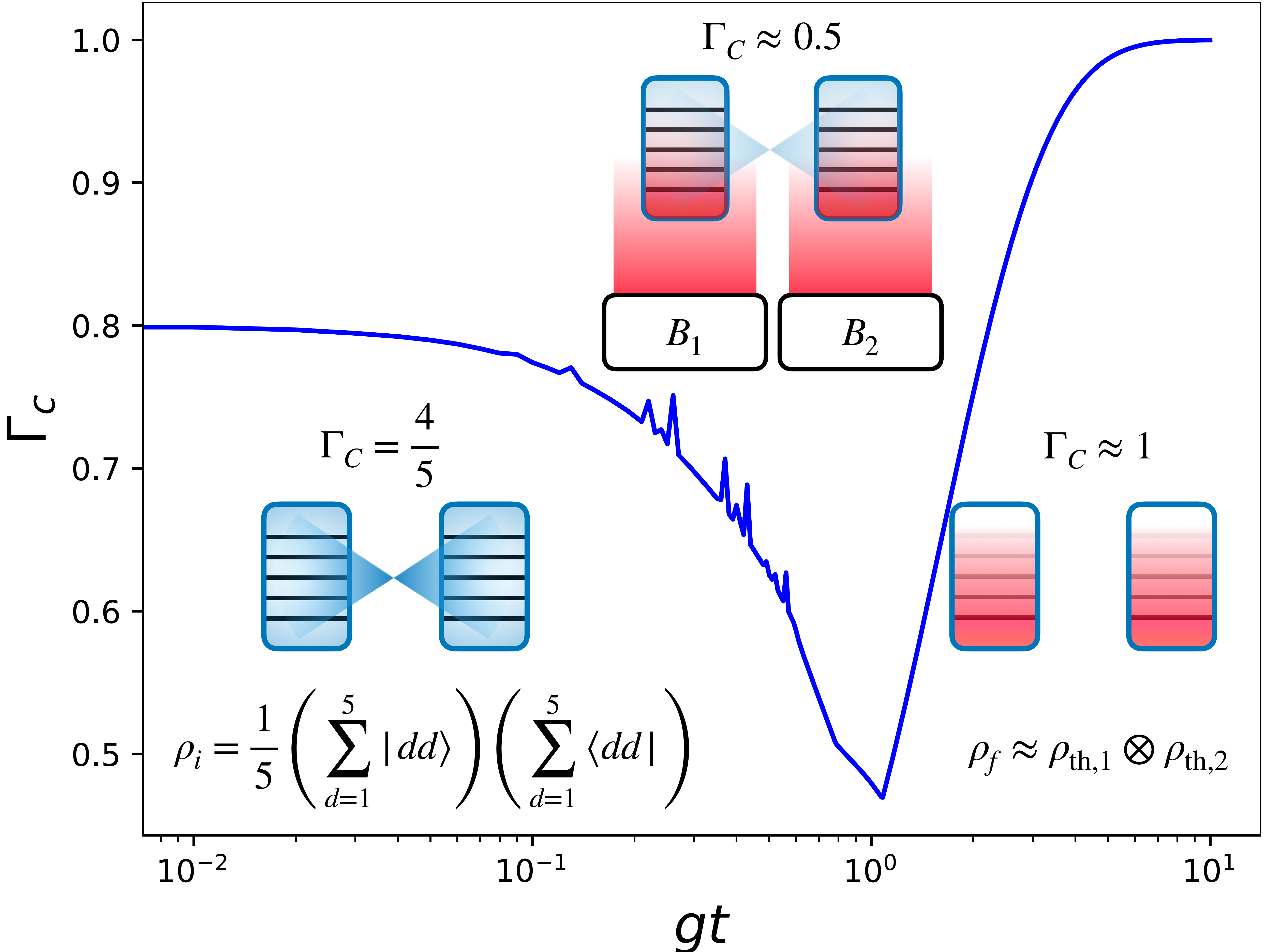} 
    \caption{{Numerically simulated 
    temporal evolution of the quantum state advantage $\Gamma_C(t)$ 
    during thermalization, for two baths $B_1$ and $B_2$ independently coupling to the qudits, for $D=5$.  
    The insets schematically illustrate the  loss of entanglement between the two qudit batteries during the thermalization  process (indicated by the weakening of wedge shading between betteries), for various qudit dimensions, cf.~Fig.~\ref{fig:figure2}. }}
    \label{fancy}
\end{figure}

Initially, at $t=0$, generalized GHZ states are prepared in the two qudits, $\ket{\psi_0}=\frac{1}{\sqrt{D}}\sum_{d=1}^D\ket{dd}$. They represent maximally entangled states, where $\ket{dd}=\ket{d}\otimes\ket{d}$. The quantum state  advantage $\Gamma_C(\ket{\psi_0} 
%$ scales with {is equal $D$ as $
=4/D$ is 
decreasing with $D$, while the entanglement measure negativity increases with $D$ as ${\mathcal N}=0.25(D-1)$. 

We illustrate in Fig.~\ref{fancy} that during the thermalization process, with the ensuing loss of entanglement,
the quantum state advantage goes through a minimum, but eventually settles at a larger value of $\Gamma_C$ than the initial one for the final, completely thermalized and disentangled state
{for large qudit dimension $D$}.{  

\begin{figure}[t!]
\begin{center}
    %\Large{(a)}
    \includegraphics[width=0.9\hsize]{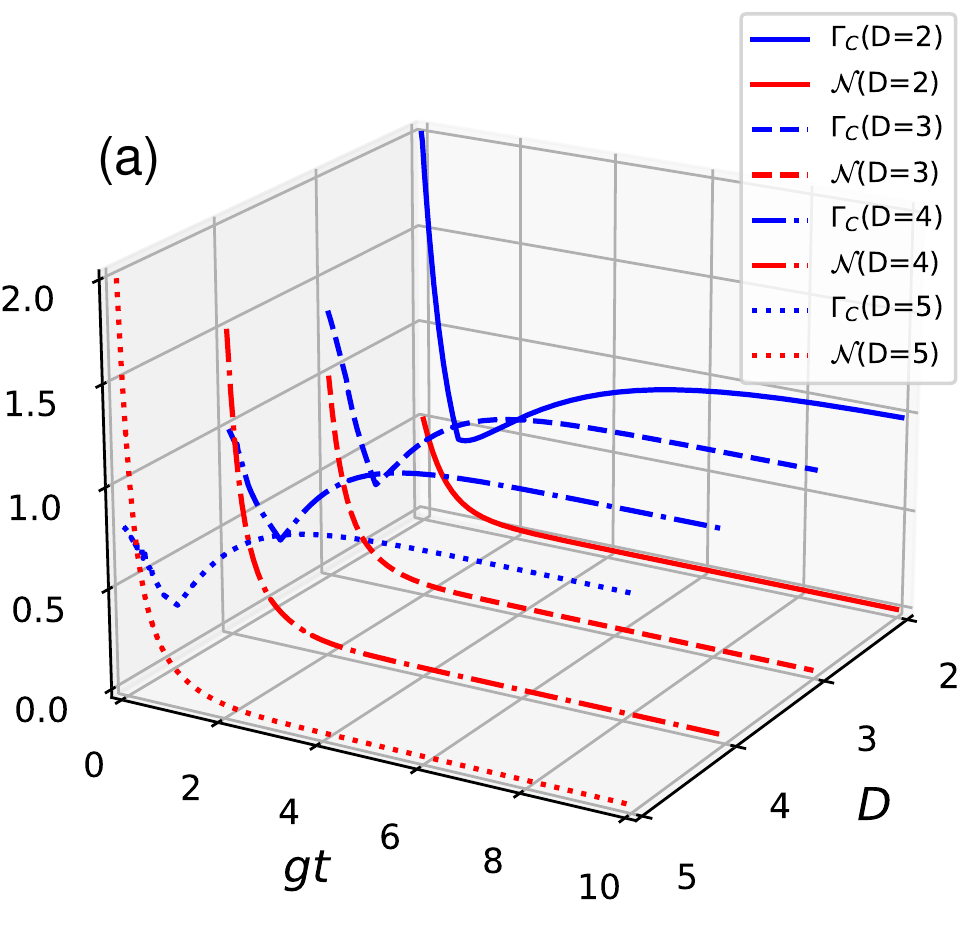}
    %\Large{(b)}
    \includegraphics[width=0.9\hsize]{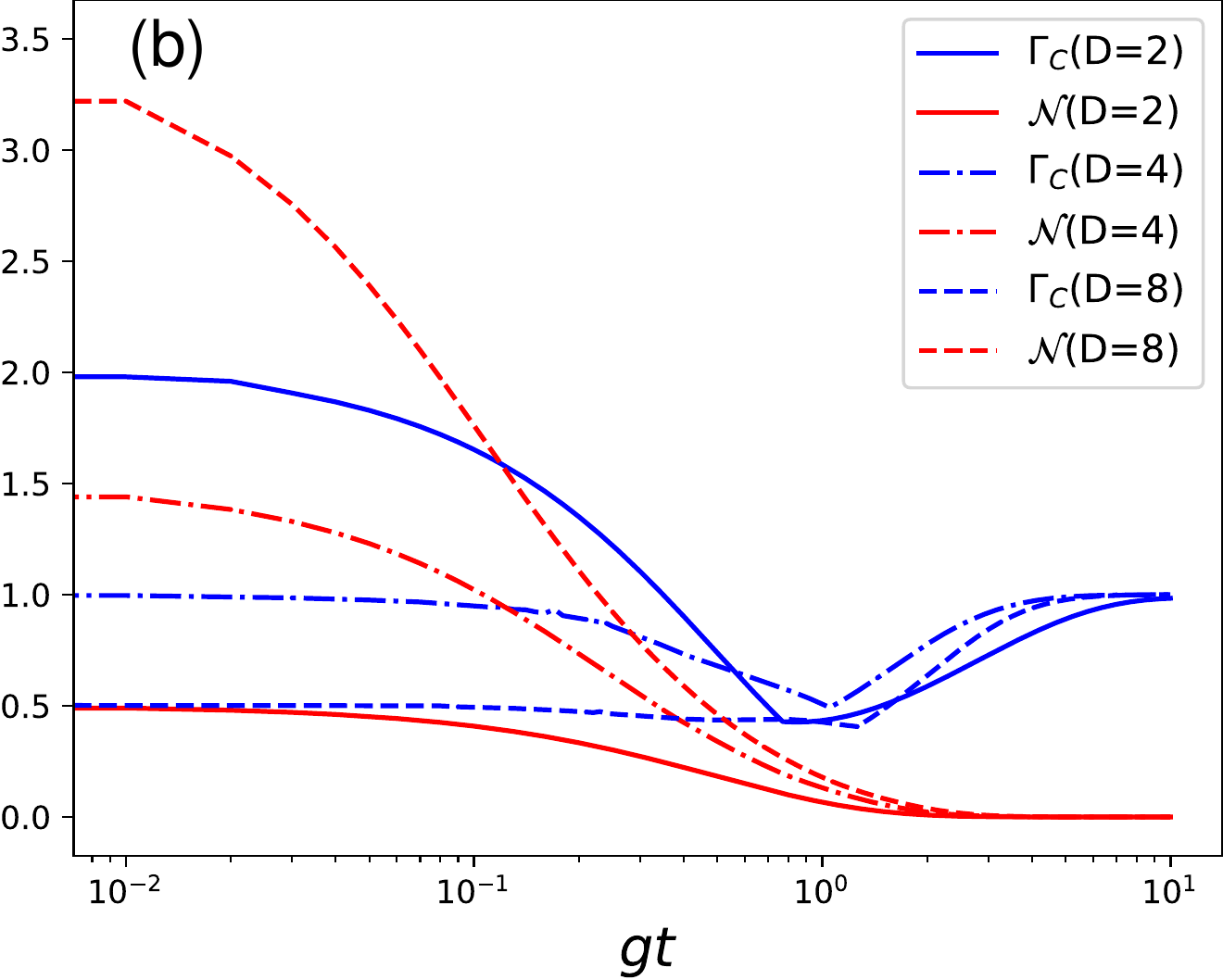}   
    \caption{(a) The time evolution of $\Gamma_C$ (red) and negativity (red), ${\mathcal N}$, via thermalization, for different  qudit dimensions $2\leq D \leq 5$, starting from a generalized GHZ state. 
The thermalization process follows \eqref{eq_Lind}. The temperature of the photonic bath is chosen to be small, $k_B T=0.1$ relative to the energy gap which we set $\omega_0=1$. %For this realization, we set $\hbar$ as
    While negativity monotonically decreases with time  $\forall\, D$, 
    the quantum state advantage 
  $\Gamma_C$ has a turning 
    point and saturates to unity, and is greater than the initial $\Gamma_C=0.8$ {(for $D= 5$).]  
    (b) Logarithmic representation of time evolution for the quantities displayed in the inset for $D=2,4,8$.  
    Note that the speed of the initial decrease of $\Gamma_C$ reduces with qudit dimension 
    and that for $D>4$ the initial $\Gamma_C$ is smaller than its final value.} }
   % \col{Enlarge step size on vertical axis by factor of two}
\label{fig:figure2}
\end{center}
\end{figure}

%We obtain $\Gamma_C$ and the negativity 
%which is the trace norm of the partial transpose density matrix, % and puritywhich is the square norm of density matrix} 
%\col{I cannot see where you defined %(logarithmic (which is an entanglement monotone \cite{VidalWerner}), 
 %so that one knows how we obtain them
We display  the evolution of $\Gamma_C$ and $\mathcal N$ via  the thermalization process by the Lindbladian evolution~\eqref{eq_Lind}, for various qudit dimensions
%, $2\leq D\leq 5$, 
in Fig.~\ref{fig:figure2}. 
We observe that negativity and thus the entanglement monotonically decrease with time. 
On the other hand, $\Gamma_C$ has turning points where $\Gamma_C$ starts to increase and then 
saturates to unity. %, which represents the semiclassical limit of quantum advantage. 
This increasing behavior is observed at all dimensions 
$D$. % which is a non-trivial result but it is able to be explained by entanglement non-monotonicity of $\Gamma_C$. 
Indeed, $\Gamma_C$ can be small even though large entanglement exists, which suggests that 
certain kinds of entanglement can deteriorate the {maximum} quantum battery power. 
We coin this phenomenon {\em entanglement disadvantage}. 
Indeed, for the present example, when states lose their initially large entanglement quantum disadvantage 
by thermalization, $\Gamma_C$ recuperates to its semiclassical value.

As a conclusion, maximally bipartite-entangled states do not necessarily enhance the quantum charging power, {but can potentially act in a detrimental way.} 
%Since the entanglement able to increase power is restricted, we have to carefully consider what kind of entanglement the system contains.

\subsection{Generalized W states}
{
To obtain more concrete examples of the potentially low quantum state advantage 
 of highly entangled states, we use states for which $\Gamma_C$ can be calculated analytically. We choose generalized  W states \cite{Wolfgang}, which are commonly referred to as being globally entangled. 
The W states %(originally formulated for three qubits) 
for $N$ qubits are given by 
\[
\ket{W_N}=\frac{1}{\sqrt{N}}(\ket{100\cdots0}+\ket{010\cdots0}+\cdots+\ket{000\cdots1}).
\]
As shown in Appendix  \ref{chargadv}, we have the exact result 
$\Gamma_C(\ket{W_N})=(3N-2)/N$ [Eq.~\eqref{Gamma_W_N}], % in the Appendix),  
which approaches three when $N$ goes to infinity. 
This clearly suggests that {also} global entanglement (entanglement connecting all cells) is not a sufficient condition for a quantum-enhanced scaling advantage with the number of cells $N$ (in the large $N$ limit).
}

\subsection{$k$-local entanglement}
\label{k-local}
{We reiterate that although global entanglement is not sufficient for $N$-scaling of the quantum state advantage, global entanglement is still necessary to outperform classical states, i.e., to enhance charging performance. For $k$-local entanglement, which means that each given cell entangles only with at most $k-1$ other cells, we can divide the set of battery cells into  {partitions} composed of $k$ cells which entangle with each other. In this system, $\Gamma_C\leq k$ generally, see Eq.~\eqref{Gamma_C_k} in Appendix \ref{chargadv}, which indicates that there is no quantum state advantage 
 in the sense of a scaling with $N$. As a result, such a quantum state advantage can only stem from a global operator which generates global entanglement, cf.~\cite{PhysRevLett.128.140501}.
For further discussion see Appendix \ref{chargadv}.
}

%\col{Lindbladian Evolution: Thermal GHZ state}

%\col{???Numerical example for $\Gamma_C$ and entanglement measure negativity $\epsilon$, with two coupled
%three-level batteries. Make plot of 3$\times$3 parameter range for  $\Gamma_C$ versus $\epsilon$???}

{
\subsection{{Sachdev-Ye-Kitaev battery charging}}
{A paradigmatic example which has been explicitly demonstrated in Ref.~\cite{Dario} 
to yield a quantum advantage, that is, a power scaling superextensively (faster than linear in $N$),  
is a charging Hamiltonian of the Sachdev-Ye-Kitaev (SYK) type: Spinless fermions on a lattice, 
with random all-to-all interactions (see for a review \cite{SYKReview}).}

{The battery Hamiltonian is here assumed to be a simple %$y$-directional 
magnetic field Hamiltonian 
\[
\hamz=\sum_{i=1}^N \mathfrak{h} \sigma^y_i, \label{hamzSYK}
\]
where $\sigma^y$ is the usual Pauli matrix and $\mathfrak h$ is a ``magnetic field" with units of energy. 
This battery is charged by the SYK Hamiltonian,
\[
\hamd=\sum_{i,j,k,l} J_{ijkl} \hat c^\dagger_i \hat c^\dagger_j \hat c_k \hat c_l,
\]
where $\hat c_j^\dagger$ and $\hat c_j$ are creation and annihilation operators of spinless 
fermions at a given site $j$. 
By the Jordan-Wigner transformation, the $\hat c_j$ can be represented by Pauli operators as  
follows, $\hat c_j=(\sigma^x_j-i\sigma^y_j)\prod_{k=1}^{j-1}\sigma^z_k$, 
{which shows how the SYK charger acts on the battery cells in the 
$\hamz$ of Eq.~\eqref{hamzSYK}.}  
The interaction coefficients $J_{ijkl}$ are complex random variables, 
sampled from a Gaussian distribution with zero mean, 
$f(|J_{ijkl}|) = 
\sqrt{\bar{J}^2/2\pi N^3}e^{-N^3|J_{ijkl}|^2/\bar{J}^2}$. 
The standard deviation reads 
$\bar{J}/N^{3/2}$, where the $N^{3/2}$ from the normalization guarantees an extensive, that is 
linear in $N$ scaling of the energy associated to the charging $\hamd$.}

{The initial state %(depleted state) 
is prepared to be the ground state of $\hamz$, $\bigotimes_{i=1}^N\ket{-Y}_i$, with 
energy $-\mathfrak{h}N$; subsequently, this state is evolved by $\hamd$ to a charged state of higher energy.
We simulate the corresponding charging process of the SYK battery, to calculate the %exact 
instantaneous power. %Since the power is trivially scaling with $\bar{J}$ and $h$, 
To this end, we define a  dimensionless time $\tau=t\bar{J}$, and 
a dimensionless power 
\[
\tilde P(\tau)
%\coloneqq \frac{1}{\sqrt{4\mean{\hamd^2} Nh^2} }
%P\left(t {\bar{J}}\right) 
\coloneqq \ \frac{1}{\sqrt{4\mean{\hamd^2} N\mathfrak{h}^2} }
P\big(\tau %\frac t {\bar{J}}
\big).
\]
By our theorem~\ref{theorem_main}, the dimensionless power $\tilde P$  
is in the present case {bounded precisely by the square root of the 
quantum state advantage, $\sqrt{\Gamma_C}$}. {Classical charging, that is by definition charging 
without generating entanglement, % between the cells, 
has $\sqrt{\Gamma_C}\coloneqq 1$, which thus represents the classical power 
bound at any time.} }
%so that we are justified in calling $\Gamma_C=1$ the classical bound.

\begin{figure}[t]
\begin{center}
    \includegraphics[width=\hsize]{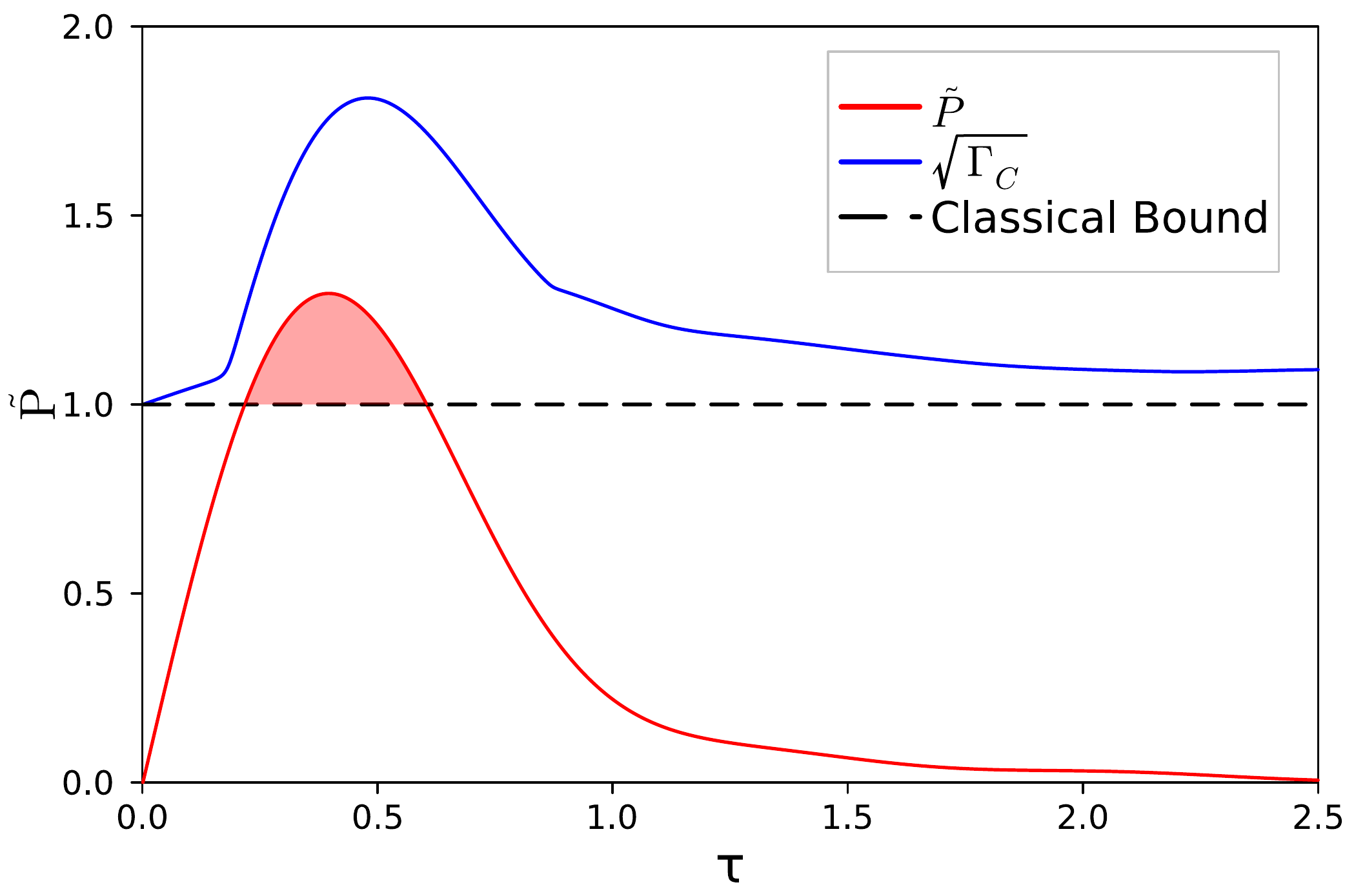}
    \caption{{{\it SYK charging.} Scaled instantaneous power $\tilde P$ as a function of 
    dimensionless time $\tau$  
     and quantum (blue solid) as well as classical (black dashed) power 
     bounds,  for $N=12$ cells. The plot shows one realization 
     of the randomized couplings $J_{ijkl}$; the results for other samplings are very similar.  
     The dashed line shows the classical charging bound corresponding to $\Gamma_C=1$.\label{fig:SYK_plot_qa} }}
\end{center}
\end{figure}

{Fig.~\ref{fig:SYK_plot_qa} plots for one disorder realization 
the dimensionless power $\tilde P$, and dimensionless power 
bound of our system $\sqrt{\Gamma_C}$, 
%which determines the power bound of our system, 
both as a function of dimensionless time $\tau$. We verified that 
other disorder realizations give very similar results, with 
     $\tilde P(\tau)$ peaking before its bound $\sqrt{\Gamma_C(\tau)}$.}

{We infer from Fig.~\ref{fig:SYK_plot_qa} that while our power bound is not saturated, 
% to $\sqrt{\Gamma_C}$, 
the instantaneous power exceeds the classical bound for all times $\tau$ corresponding to the 
light-red-shaded area in Fig.~\ref{fig:SYK_plot_qa}. {In this time window, we thus obtain a 
genuine quantum state advantage from SYK charging.} }

{Furthermore, we observe from Fig.~\ref{fig:SYK_plot_qa} 
that the quantum state advantage $\Gamma_C(\R)$  
for a SYK-charged battery is not monotonously increasing with time. 
The scaled power $\tilde P$ returns to unity when the battery quantum state becomes 
chaotic due to the random couplings of the SYK charger which does maximally
entangle the cells %(causes a maximally scrambled final state) 
\cite{Dario}. We prove in Appendix~\ref{fchaos} that a maximally entangled final  
state has indeed $\Gamma_C=1$,  
and thus no quantum state advantage \footnote{{We conjecture in this regard that the asymptotics 
of the bound (solid blue line in Fig.~\ref{fig:SYK_plot_qa}) apparently not reaching unity is a finite $N$ effect 
($N=12$ in our simulations).}}.  
}

{This phenomenon of non-monotonicity of $\Gamma_C$, even though the
SYK charger maximally entangles, 
 is another instance illustrating our theorem \ref{theorem2}, which states 
 that $\Gamma_C$ is not an entanglement monotone.} 
%entanglement non-monotonicity of quantum state advantage, $\Gamma_C$. 

\section{Conclusion}

To address the entanglement dependence of quantum charging power in a most 
general, implementation-independent approach, we have employed our  
novel definition of a {\em commutation matrix} instead of the conventionally used covariance matrix. This enables us to isolate the quantum speed limit and the battery energy gap in the bound on the charging power,  from the quantum state advantage $\Gamma_C(\R)$ %of charging
which depends only on the quantum state as represented by the density matrix encoding the entanglement. 
%and thus the impact of entanglement, in the bound on battery charging power. 
We have thereby, in particular, shown that the maximum obtainable power is not an 
entanglement monotone.
%, and that highly entangled quantum states can even % therefore 
%lead to lower charging power than {separable states.} 
{Note that we have also shown (cf.~Proposition \ref{pro:sap_states} in Appendix \ref{propcommmatrix}),  
that mixtures of product states (separable states), in general 
leading to quantum correlations beyond entanglement such as for example measured by quantum discord \cite{Zurek,Adesso_2016},  % \cite{Zurek}, 
do not imply a quantum state advantage in the sense we have put forward here. That is, they do not lead to 
a quantum advantage which is isolable from speed limit and gap. }
% in the power bound.} 

%The operational meaning of our definition of quantum advantage of charging $\Gamma_C$ 
%tor experiment is that one can separate and thus measure $\Gamma_C$,   
 %provided one can also experimentally realize the QSL by varying the charging Hamiltonians. \col{when the QSL is realizable}. }% (to realize and varying quantum states.}
 
Indeed, the instantaneous power of battery charging is obviously dependent 
on the intermediate states between the initial and final states of the charging protocol. 
Our derived bound then demonstrates the necessity of the entanglement of any intermediate 
states to gain an instantaneous quantum advantage of battery charging. 
This statement is, in particular, independent of the preparation of the initial state.
We however also again stress that while entanglement is a necessary condition for a quantum state advantage 
 $\Gamma_C > 1$, we have argued that there also exists a potential quantum {disadvantage}, that is,  
highly entangled states can have a potentially lower maximum charging power than %a \col{simple classical 
{separable states}, and entanglement is, while necessary, not sufficient to obtain a quantum state advantage. % realized for a given physical system.
%As a simple illustration, suppose we start from a nonentangled ground state of the batteries and end the charging
%protocol with another nonentangled state.  

While we have provided concrete calculations for negativity as a measure of entanglement, 
we anticipate that the qualitative behavior of the quantum state advantage measure $\Gamma_C(\R)$ 
%we have demonstrated 
remains unaffected for other measures of bipartite entanglement.  
Finally, the generalization to multipartite entanglement measures, and to develop a general classification scheme for beneficially or detrimentally entangled states are left for future study. 
%{Finally, due to the quantum advantage %$\Gamma_C$ 
%only depending on the full quantum state $\R$, 
%$\Gamma_C$ will prove useful also to assess the impact of general nonclassical correlations on battery power; 
%for example, as measured by quantum discord.}

%\col{We have shown it for negativity as entanglement measure; can we argue that the measure choice 
%does not impact our conclusions?}\\

\section*{Acknowledgments} 
JYG thanks %Dr. 
JungYun Han for discussions and 
Jeongrak Son for help with the numerics. This work has been supported by the National Research Foundation of Korea under 
Grants No.~2017R1A2A2A05001422 and No.~2020R1A2C2008103.

\section*{Conflict of Interest}

The authors have no conflicts to disclose.

\section*{Data Availability} 

The data that support the findings of this study are available from the corresponding author upon reasonable request.

\begin{appendix}

\section{Several properties of the commutation matrix}
\label{propcommmatrix}
\begin{proposition}
    $\gamma_C(\R,\hat{\bm M})$ is a positive semidefinite matrix for any $\R$. 
\end{proposition}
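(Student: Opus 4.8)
The plan is to verify positive semidefiniteness directly from the definition \eqref{defineGP} by showing that the quadratic form associated with $\gamma_C(\R,\hat{\bm M})$ is nonnegative for every real vector. First I would fix an arbitrary real vector $\bm{v}=(v_\mu)$ and assemble the single operator $\hat{X}\coloneqq\sum_\mu v_\mu \hat{M}_\mu$. Since the members of $\hat{\bm M}$ are observables and hence Hermitian, and the $v_\mu$ are real, $\hat{X}$ is Hermitian. Using that the commutator is linear in its first slot, the bilinearity of \eqref{defineGP} in its two index slots makes the quadratic form collapse into a single commutator,
$$
\sum_{\mu\nu} v_\mu\, \gamma_C(\R,\hat{\bm M})_{\mu\nu}\, v_\nu = -\tr\bigl([\hat{X},\sqrt{\R}]^2\bigr).
$$

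Next I would exploit the Hermiticity structure. Because both $\hat{X}$ and $\sqrt{\R}$ are Hermitian (recall $\sqrt{\R}$ is the well-defined positive semidefinite square root of $\R$), the commutator $C\coloneqq[\hat{X},\sqrt{\R}]$ is anti-Hermitian, $C^\dagger=-C$. Consequently $-C^2=C^\dagger C$, so that
$$
-\tr\bigl(C^2\bigr)=\tr\bigl(C^\dagger C\bigr)=\sum_{jk}|C_{jk}|^2\geq 0 .
$$
As this holds for every real $\bm{v}$, the symmetric real matrix $\gamma_C(\R,\hat{\bm M})$ is positive semidefinite, which is the claim.

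I do not expect a substantial obstacle, since the argument is short and purely algebraic; the crux is simply the recognition that the quadratic form reduces to minus the trace of the square of an anti-Hermitian operator. The one point requiring care is the Hermiticity of the operator set $\hat{\bm M}$: the paper describes its members as generators associated with the Lie algebra of $U(n_i)$ subject to the orthonormality condition \eqref{Acond}, but it is their role as the observables entering the covariance matrix \eqref{CovDef} that fixes them to be Hermitian. With that convention in place, the reality of $\bm{v}$ guarantees $\hat{X}=\hat{X}^\dagger$, the anti-Hermiticity of $C$ follows at once, and the result is immediate. I note that this proposition is exactly what guarantees that $\Gamma_C(\R)=\|\gamma_C(\R,\hat{\bm M})\|$ is a nonnegative largest eigenvalue, so that the square root appearing in the power bound \eqref{inequal_power} of Theorem \ref{theorem_main} is real; the very same manipulation reappears there in bounding $-\tr([\hamz,\sqrt{\R}]^2)$.
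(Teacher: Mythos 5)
Your proof is correct and follows essentially the same route as the paper: collapse the quadratic form to $-\tr\bigl([\hat X,\sqrt{\R}]^2\bigr)$ via linearity, then use anti-Hermiticity of the commutator of two Hermitian operators. Your step $-\tr(C^2)=\tr(C^\dagger C)=\sum_{jk}|C_{jk}|^2\geq 0$ is a marginally more direct rendering of the paper's observation that an anti-Hermitian operator has imaginary eigenvalues so its square is negative semidefinite, but it is the same argument in substance.
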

\begin{proof}{
    For any vector ${\bm u}$, we can define a corresponding operator $\hat O=\hat{\bm M}\cdot {\bm u}$. By the definition of $\gamma_C(\R,\hat{\bm M})$ in Eq.~\eqref{defineGP}, we obtain 
    \[\begin{split}
        &{\bm u}^\mathrm{T}\gamma_C(\R,\hat{\bm M}){\bm u}=\sum_{\mu,\nu}u_\mu u_\nu\gamma_C(\R,\hat{\bm M})_{\mu\nu}\\
       &=-\sum_{\mu,\nu}u_\mu u_\nu\tr([\hat{M_\mu},\sqrt{\R}][\hat{M_\nu},\sqrt{\R}])\\
        &=-\tr([\sum_{\mu}u_\mu\hat{M_\mu},\sqrt{\R}][\sum_{\nu}u_\nu\hat{M_\nu},\sqrt{\R}])\\
        &=-\tr([\sqrt{\R},\hat O]^2).
    \end{split}
    \]}    
Both  $\R$ and $\hat O$ are Hermitian operators; then $[\sqrt{\R},\hat O]$ is an anti-Hermitian operator. Since the eigenvalues of $[\sqrt{\R},\hat O]$ are imaginary, $[\sqrt{\R},\hat O]^2$ is a negative semidefinite matrix.
    Hence $-\tr([\sqrt{\R},\hat O]^2)$ is zero or positive for any $\hat O$ corresponding to a given ${\bm u}$, which proves the proposition.
\end{proof}

\begin{proposition} \label{pro:max_G}
The quantum state advantage  $\Gamma_C(\R)$ is bounded by the number of cells,
    \[
    \Gamma_C(\R)\leq N
    \]
\end{proposition}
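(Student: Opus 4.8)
The plan is to turn $\Gamma_C(\R)$ into a constrained maximization and then bound it cell by cell. Since $\gamma_C(\R,\hat{\bm M})$ is real, symmetric and positive semidefinite, its largest eigenvalue is the Rayleigh maximum $\Gamma_C(\R)=\max_{\|{\bm u}\|=1}{\bm u}^{\mathrm T}\gamma_C(\R,\hat{\bm M}){\bm u}$. Using the identity established in the proof of Proposition~1, ${\bm u}^{\mathrm T}\gamma_C(\R,\hat{\bm M}){\bm u}=-\tr([\sqrt{\R},\hat{O}]^2)$ with $\hat{O}=\hat{\bm M}\cdot{\bm u}=\sum_{i=1}^N\hat{O}_i$, where each $\hat{O}_i=\hat a_i\otimes\bigotimes_{j\neq i}\hat{I}^j$ is supported on cell $i$. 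The orthonormality condition~\eqref{Acond} converts the constraint $\|{\bm u}\|=1$ into $\sum_{i=1}^N\tr(\hat a_i^2)=1$. Hence it suffices to prove $-\tr([\sqrt{\R},\hat{O}]^2)\le N$ for every such $\hat{O}$.

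First I would split the commutator over cells. Because $[\sqrt{\R},\hat{O}]$ is anti-Hermitian, $-\tr([\sqrt{\R},\hat{O}]^2)=\|[\sqrt{\R},\hat{O}]\|_{\mathrm{HS}}^2$ is a squared Hilbert--Schmidt norm; writing $[\sqrt{\R},\hat{O}]=\sum_i[\sqrt{\R},\hat{O}_i]$ and applying the triangle (Minkowski) inequality gives $\sqrt{-\tr([\sqrt{\R},\hat{O}]^2)}\le\sum_i\sqrt{-\tr([\sqrt{\R},\hat{O}_i]^2)}$. If the per-cell bound $-\tr([\sqrt{\R},\hat{O}_i]^2)\le\tr(\hat a_i^2)$ holds, then a Cauchy--Schwarz step together with the constraint closes the argument, $\sum_i\sqrt{\tr(\hat a_i^2)}\le\sqrt{N}\,\big(\sum_i\tr(\hat a_i^2)\big)^{1/2}=\sqrt{N}$, so that $-\tr([\sqrt{\R},\hat{O}]^2)\le N$ and therefore $\Gamma_C(\R)\le N$.

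The crux --- and the step I expect to be the main obstacle --- is this per-cell inequality with the sharp constant $1$; a crude estimate such as $-\tr([\sqrt{\R},\hat{O}_i]^2)\le 2\tr(\R\hat{O}_i^2)$ would only yield $\Gamma_C\le 2N$. I would resolve it by recognizing $-\tfrac12\tr([\sqrt{\R},\hat{O}_i]^2)$ as the Wigner--Yanase skew information $I(\R,\hat{O}_i)=\tr(\R\hat{O}_i^2)-\tr(\sqrt{\R}\hat{O}_i\sqrt{\R}\hat{O}_i)$ and then using two facts: (i) the skew information never exceeds the variance, $I(\R,\hat{O}_i)\le\mean{\Delta\hat{O}_i^2}$, which follows from the one-line Cauchy--Schwarz estimate $\tr(\R\hat{O}_i)^2\le\tr(\sqrt{\R}\hat{O}_i\sqrt{\R}\hat{O}_i)$ (take the Hilbert--Schmidt inner product of $\R^{1/4}\hat{O}_i\R^{1/4}$ with $\sqrt{\R}$, using $\tr\R=1$); and (ii) since $\hat{O}_i$ acts only on cell $i$, its variance depends solely on the reduced state $\R_i=\tr_{j\neq i}\R$, and the maximal variance of $\hat a_i$ over all single-cell states equals $\tfrac14(\lambda_{\max}-\lambda_{\min})^2\le\tfrac12\sum_m\lambda_m^2=\tfrac12\tr(\hat a_i^2)$, where $\{\lambda_m\}$ are the eigenvalues of $\hat a_i$. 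Combining (i) and (ii), $-\tr([\sqrt{\R},\hat{O}_i]^2)=2I(\R,\hat{O}_i)\le 2\mean{\Delta\hat{O}_i^2}\le\tr(\hat a_i^2)$, which is precisely the per-cell bound needed above. I note that the elementary inequality $(\lambda_{\max}-\lambda_{\min})^2\le2\sum_m\lambda_m^2$ is tight exactly for two nonzero opposite eigenvalues, consistent with saturation of $\Gamma_C=N$ by maximally entangled configurations.
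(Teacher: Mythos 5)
Your proof is correct, and it takes a genuinely different route from the paper's. The paper proves Proposition~\ref{pro:max_G} indirectly through the physics of the power bound: it invokes the saturability of inequality \eqref{inequal_power} (the equality conditions \eqref{eq_con_1} and \eqref{eq_con_2}, with the optimal driving constructed in Eq.~\eqref{eigen_v_ij} of Appendix~\ref{proof_power}), compares the saturated power against the covariance-matrix bound \eqref{power_uncertanty} together with the known estimate $\|\gamma(\R,\hat{\bm M})\|\leq N/2$, and concludes $\sqrt{2\kappa\,\Gamma_C(\R)}\leq\sqrt{2N}$, finishing with $\kappa\geq1$. You instead stay entirely at the level of the commutation matrix: the Rayleigh-quotient identity ${\bm u}^{\mathrm{T}}\gamma_C(\R,\hat{\bm M}){\bm u}=-\tr([\sqrt{\R},\hat O]^2)$, the cell decomposition $\hat O=\sum_i \hat O_i$ with $\sum_i\tr(\hat a_i^2)=1$ from the orthonormality condition \eqref{Acond}, the skew-information identity $-\tr([\sqrt{\R},\hat O_i]^2)=2\bigl(\tr(\R\hat O_i^2)-\tr(\sqrt{\R}\hat O_i\sqrt{\R}\hat O_i)\bigr)$, the bound skew information $\leq$ variance via $\tr(\R\hat O_i)^2\leq\tr(\sqrt{\R}\hat O_i\sqrt{\R}\hat O_i)$ (incidentally the very same Cauchy--Schwarz estimate the paper uses in Appendix~\ref{proof_power} to establish $\kappa\geq1$), Popoviciu's inequality $\mean{\Delta\hat a_i^2}\leq\tfrac14(\lambda_{\max}-\lambda_{\min})^2\leq\tfrac12\tr(\hat a_i^2)$ applied to the reduced state $\R_i$, and finally Minkowski in Hilbert--Schmidt norm plus Cauchy--Schwarz over cells, which yield the constant $N$ with nothing lost. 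All steps check out, with the traces of the $\hat a_i$ correctly taken as local-cell traces, consistent with the paper's normalization (cf.\ the qubit observable set in Appendix~\ref{chargadv}). What your route buys: it is self-contained and purely linear-algebraic, independent of Theorem~\ref{theorem_main} (so no $\kappa$ bookkeeping and no appeal to the existence of optimal $\hamz$ and $\hamd$) and of the imported covariance-matrix result of the earlier literature; moreover your equality analysis makes structurally transparent when $\Gamma_C=N$ can be approached, namely per-cell operators with two opposite nonzero eigenvalues on maximally entangled (GHZ-type) qubit configurations. What the paper's route buys: given Theorem~\ref{theorem_main} it is a three-line corollary, and it exhibits the direct quantitative link between $\Gamma_C$ and the conventional covariance matrix $\gamma$.
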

\begin{proof} 
    This will be proved by the fact that our inequality~\eqref{inequal_power} is always achievable, and by using Eq.~\eqref{power_uncertanty}.

    There always exist some $\hamz$ and $\hamd$ which satisfy the equality conditions~Eqs.~\eqref{eq_con_1} and \eqref{eq_con_2}.
    By Eq.~\eqref{power_uncertanty}, we obtain
    \[
    \begin{split}
        \sqrt{2\kappa \Gamma_C(\R)} =\frac{P}{\sqrt{\sum_{i=1}^N\tr(\hamz_i^2)\mean{\Delta \hamd^2}}}\\
        \leq2\sqrt{\|\gamma(\R,\hat{\bm M})\|}\leq\sqrt{2N}. 
    \end{split}
    \]
    %for $\exists\hamz, \hamd$. 
    Since $1\le\kappa\le 2$, the proposition is proven.
\end{proof}

\begin{proposition} \label{pro:sap_states}
{For separable states, a linear summation of product states, }
    $\Gamma_C(\R)$ is less than or equal to unity; %as it should be, 
    there is no quantum advantage for any such states.
%     {quantum correlations beyond entanglement, as e.g. measured by quantum discord}. 
    %\col{But for discord there is, and summations of product
    %states can have nonclassical correlations according to Zurek.}
\end{proposition}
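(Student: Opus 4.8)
The plan is to recognize the quadratic form associated with $\gamma_C$ as (twice) the Wigner--Yanase skew information, and then to exploit the convexity of the latter. For a normalized real vector $\bm u$ I set $\hat O=\hat{\bm M}\cdot\bm u$, which by the construction of $\hat{\bm M}$ in Eq.~\eqref{M} is a sum $\hat O=\sum_{i=1}^N\hat O_i$ of single-cell operators. Repeating the computation used to show that $\gamma_C$ is positive semidefinite, one has $\bm u^{\mathrm T}\gamma_C(\R,\hat{\bm M})\bm u=-\tr([\sqrt{\R},\hat O]^2)=2\bigl[\tr(\R\hat O^2)-\tr(\sqrt{\R}\hat O\sqrt{\R}\hat O)\bigr]$, which is exactly twice the Wigner--Yanase skew information $I(\R,\hat O)$ of $\hat O$ in the state $\R$. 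Since $\Gamma_C(\R)$ is the largest eigenvalue of $\gamma_C(\R,\hat{\bm M})$, it suffices to prove $I(\R,\hat O)\le\tfrac12$ for every normalized $\bm u$.

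Next I would use that any separable state is a convex combination of pure product states, $\R=\sum_l p_l\ket{\Psi_l}\bra{\Psi_l}$ with $\ket{\Psi_l}=\bigotimes_{i}\ket{\psi^i_l}$, $p_l\ge 0$ and $\sum_l p_l=1$. The key structural input is convexity of the skew information in the state at fixed observable, $I\bigl(\sum_l p_l\R_l,\hat O\bigr)\le\sum_l p_l\,I(\R_l,\hat O)$. Applying this to the separable decomposition, and using that for a pure state the skew information collapses to a variance, $I(\ket{\Psi_l}\bra{\Psi_l},\hat O)=\mean{\hat O^2}_{\Psi_l}-\mean{\hat O}_{\Psi_l}^2$, reduces the problem to bounding the variance of $\hat O$ in an arbitrary pure product state.

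For that bound I would invoke the product structure: since the $\hat O_i$ act on distinct cells and $\ket{\Psi_l}$ factorizes, all cross-cell covariances vanish and $\mathrm{Var}_{\Psi_l}(\hat O)=\sum_i\mathrm{Var}_{\psi^i_l}(\hat O_i)$. Each single-cell term is then controlled by an elementary extremal argument: dropping the variance-irrelevant identity component, the pure-state variance of $\hat O_i$ is maximized by concentrating amplitude on its two extreme eigenvalues $o_{\min},o_{\max}$, so that $\mathrm{Var}_{\psi^i_l}(\hat O_i)\le\tfrac14(o_{\max}-o_{\min})^2\le\tfrac12(o_{\max}^2+o_{\min}^2)\le\tfrac12\tr(\hat O_i^2)=\tfrac12\|\bm u_i\|^2$, where $\bm u_i$ collects the components of $\bm u$ on cell $i$. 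Summing over cells and using $\sum_i\|\bm u_i\|^2=\|\bm u\|^2=1$ gives $\mathrm{Var}_{\Psi_l}(\hat O)\le\tfrac12$ for every $l$, hence $I(\R,\hat O)\le\sum_l p_l\tfrac12=\tfrac12$ and therefore $\Gamma_C(\R)\le 1$. Tightness follows from the fact, noted earlier, that pure product states already saturate $\Gamma_C=1$.

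The only nontrivial ingredient is the convexity of the skew information in the state, equivalently the concavity of $\R\mapsto\tr(\sqrt{\R}\hat O\sqrt{\R}\hat O)$; this is the point I expect to be the main obstacle, since it is a genuine operator-convexity statement rather than a routine manipulation. I would either cite the original Wigner--Yanase theorem, or, for a self-contained treatment, deduce it from Lieb's concavity theorem, which guarantees joint concavity of $(A,B)\mapsto\tr(A^{1/2}\hat O B^{1/2}\hat O)$ and yields the claim upon restricting to $A=B=\R$. Everything else---the identification with the skew information, the factorization of the product-state variance, and the extreme-eigenvalue estimate---is elementary.
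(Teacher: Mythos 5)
Your proof is correct, and it takes a genuinely different route from the paper's. The paper argues \emph{indirectly} through the power bound: it decomposes the separable state as $\R=\sum_l p_l\R_l$ with pure product components, applies the covariance-matrix bound \eqref{power_uncertanty} to each $\R_l$ using $\|\gamma(\R_l,\hat{\bm M})\|=\tfrac12$, combines with the concavity estimate $\sqrt{\mean{\Delta\hamd^2}}\geq\sum_l p_l\sqrt{\mean{\Delta\hamd^2}_l}$ to get $P\leq\sqrt{2\sum_{i}\tr(\hamz_i^2)\mean{\Delta\hamd^2}}$, and then invokes the achievability of Theorem~\ref{theorem_main} (choosing $\hamz$ and $\hamd$ to saturate conditions \eqref{eq_con_1} and \eqref{eq_con_2}) to convert this into $\sqrt{2\kappa\,\Gamma_C(\R)}\leq\sqrt{2}$, closing with the bookkeeping $\kappa\geq1$. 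You instead work \emph{directly} at the level of the quadratic form: your identity $\bm u^{\mathrm T}\gamma_C(\R,\hat{\bm M})\bm u=-\tr([\sqrt{\R},\hat O]^2)=2\bigl[\tr(\R\hat O^2)-\tr(\sqrt{\R}\hat O\sqrt{\R}\hat O)\bigr]$ is correct and identifies the form as twice the Wigner--Yanase skew information; its convexity in $\R$ (the genuine external input, obtainable from Lieb's joint concavity of $(A,B)\mapsto\tr(A^{1/2}\hat O B^{1/2}\hat O)$ restricted to the diagonal) reduces the claim to pure product states, where skew information collapses to a variance, cross-cell covariances vanish, and your single-cell chain $\mathrm{Var}(\hat O_i)\leq\tfrac14(o_{\max}-o_{\min})^2\leq\tfrac12\tr(\hat O_i^2)=\tfrac12\|\bm u_i\|^2$ is valid (the degenerate case $o_{\max}=o_{\min}$ is trivial since the variance then vanishes, and dropping the identity component only decreases $\tr(\hat O_i^2)$). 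What each approach buys: yours avoids the saturation claims of Appendix~\ref{proof_power} and the $\kappa\in[1,2]$ coefficient entirely, gives tightness for free (on pure states $\gamma_C=2\gamma$, so pure product states attain $\Gamma_C=1$), and structurally explains why $\gamma_C$ strips out the classical correlations that contaminate $\gamma$ --- skew information, unlike variance, is convex under mixing --- at the cost of importing the Wigner--Yanase/Lieb concavity theorem; the paper's proof stays self-contained within its own power-bound machinery but is indirect and leans on the achievability statements proven separately in the appendix.
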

\begin{proof}
    The proof is similar to the proof of proposition~\ref{pro:max_G}, but here we will use a decomposition of the separable state $\R$ as follows.

    By the definition of separable states, $\R$ can  be decomposed {by a finite number of product states} as
    \[
    \R=\sum_l p_l\R_l. 
    \]
    In this expression, $\R_l$ can be completely expressed {by products of states} and $\sum_l p_l=1$. 
    We can then obtain a bound on the power by Eq.~\eqref{power_uncertanty}, 
    \[\label{ineq:sep_states}
    P\leq 2\sum_l p_l \sqrt{\|\gamma(\R_l,\hat{\bm M})\|\sum_{i=1}^N\tr(\hamz_i^2)\mean{\Delta V^2}_l},
    \]
    where any $\|\gamma(\R_l,\hat{\bm M})\|$ is equal to 1/2 by definition of the covariance matrix. Using the inequality 
    $\sqrt{\mean{\Delta \hamd^2}}\geq \sum_l p_l\sqrt{\mean{\Delta \hamd^2}_l}$,  
    there always exist $\hamz$ and $\hamd$ which satisfy the equality conditions in \eqref{eq_con_1} and \eqref{eq_con_2}.
    With Eq.~\eqref{power_uncertanty}, Eq~\eqref{ineq:sep_states}] is rewritten as
    \[
    P\leq \sqrt{2\sum_{i=1}^N\tr(\hamz_i^2)\mean{\Delta V^2}},
    \]
    which yields % the inequality 
    \[
        \sqrt{2\kappa \Gamma_C(\R)} =\frac{P}{\sqrt{\sum_{i=1}^N\tr(\hamz_i^2)\mean{\Delta \hamd^2}}}
        \leq\sqrt{2} 
    \]
    %for $\exists\hamz, \hamd$. 
    Since $1\le\kappa\le 2$, the proposition is proven.
\end{proof}

\begin{proposition}
    The eigenvalues of $\gamma_C(\R)$ are conserved 
    {during evolution with local driving Hamiltonians.} 
\end{proposition}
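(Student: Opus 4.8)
The plan is to show that a local driving Hamiltonian generates purely local (factorized) unitary evolution, and that conjugating the observable set by the corresponding local unitary produces another admissible orthonormal observable set; the already-established observable-independence of the spectrum of $\gamma_C$ then finishes the argument. First I would note that a local driving Hamiltonian has the form $\hamd(t)=\sum_i \hat h_i(t)\bigotimes_{j\neq i}\hat I^j$, so that distinct single-cell terms commute and the solution of the von Neumann equation \eqref{eq:state_evolution} is $\R(t)=\U\R(0)\U^\dagger$ with a factorized unitary $\U=\bigotimes_{i=1}^N\U_i$. Since $\U$ is unitary, $\sqrt{\R(t)}=\U\sqrt{\R(0)}\,\U^\dagger$, and therefore each commutator entering the definition \eqref{defineGP} can be rewritten as
\[
[\hat M_\mu,\sqrt{\R(t)}]=\U\,[\U^\dagger\hat M_\mu\U,\sqrt{\R(0)}]\,\U^\dagger .
\]

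Inserting this into \eqref{defineGP} and using cyclicity of the trace, the outermost factors $\U$ and $\U^\dagger$ cancel, yielding the exact matrix identity $\gamma_C(\R(t),\hat{\bm M})=\gamma_C(\R(0),\hat{\bm M}')$, where $\hat{\bm M}'=\U^\dagger\hat{\bm M}\U$ is the conjugated observable set. It then remains to verify that $\hat{\bm M}'$ is again an admissible local observable set obeying the orthonormality condition \eqref{Acond}. This is where the locality of $\U$ is essential: a local element $\bigotimes_{j\neq i}\hat I^j\otimes A^i_\alpha$ of \eqref{M} is mapped to $\bigotimes_{j\neq i}\hat I^j\otimes(\U_i^\dagger A^i_\alpha\U_i)$, which still acts nontrivially only on the $i$th cell, so $\hat{\bm M}'$ retains the local tensor structure. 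Moreover $\{\U_i^\dagger A^i_\alpha\U_i\}_\alpha$ is again a basis of the Lie algebra of $U(n_i)$, and orthonormality is preserved because $\tr(\U_i^\dagger A^i_\alpha\U_i\,\U_i^\dagger A^i_\beta\U_i)=\tr(A^i_\alpha A^i_\beta)=\delta_{\alpha\beta}$.

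Finally, since both $\hat{\bm M}$ and $\hat{\bm M}'$ satisfy \eqref{Acond}, the earlier-established property that the eigenvalues of $\gamma_C$ depend only on the state $\R$ (and not on the particular orthonormal observable set) implies that $\gamma_C(\R(0),\hat{\bm M}')$ and $\gamma_C(\R(0),\hat{\bm M})$ share the same spectrum. Combined with the matrix identity above, this shows that $\gamma_C(\R(t),\hat{\bm M})$ and $\gamma_C(\R(0),\hat{\bm M})$ have identical eigenvalues for all $t$, which is the claim. I expect the middle step — confirming that the conjugated set remains a legitimate orthonormal local observable set — to be the main obstacle, since it is precisely the product structure of the driving, rather than merely its unitarity, that is indispensable: a global $\U$ would map local observables into nonlocal ones, so $\hat{\bm M}'$ would fall outside the class of sets for which the spectral-invariance property holds, and the conclusion would fail.
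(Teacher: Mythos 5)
Your proof is correct and follows essentially the same route as the paper's: both exploit that a local driving Hamiltonian exponentiates to a factorized unitary $\U=\bigotimes_i\U_i$, move the conjugation from $\sqrt{\R}$ onto the observable set via trace cyclicity, and observe that local conjugation maps the orthonormal local set $\hat{\bm M}$ to another admissible orthonormal local set. The only cosmetic difference is that the paper makes the last step explicit by writing $\U_i A^i_\mu \U_i^\dagger=\sum_\xi O_{\mu\xi}A^i_\xi$ for an orthogonal matrix $O$, so that $\gamma_C(\U\R\,\U^\dagger,\hat{\bm M})=O^\mathrm{T}\gamma_C(\R,\hat{\bm M})O$, whereas you package this same fact as the previously stated observable-set independence of the spectrum of $\gamma_C$.
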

\begin{proof}
A local driving Hamiltonian is given as 
    \[
    \hamd=\sum_{i=1}^N \hamd_i\bigotimes_{j\neq i}\hat{I}_j,
    \]
    such that a unitary operator from $\hamd$ is expressed by 
    \[
    \hat U =e^{-i\hamd T}=\prod_i^N e^{-i\hamd_i\bigotimes_{j\neq i}\hat{I}_j}=\prod_i^N \hat U_i.
    \label{decomposeU}
    \]
    We then obtain the $\gamma_C$ matrix for the time-evolved state $\hat \hat U\R \hat \hat U^\dagger$ 
    as follows 
    \[
    \begin{split}
    \gamma_C( \hat{U}\R \hat{U}^\dagger,\hat{\bm M}_{\mu\nu})=-\tr([\sqrt{ \hat{U}\R \hat {U}^\dagger},\hat{M}_\mu][\sqrt{\hat{U}\R \hat{U}^\dagger},\hat{M}_\nu])\\
    =-\tr([\sqrt{\R},\hat{U}^\dagger\hat{M}_\mu \hat{U}][\sqrt{\R },\hat{U}^\dagger\hat{M}_\nu \hat{U}]).
    \end{split}
    \]
    Without loss of generality, let us assume that $\hat{M}_\mu$ derives from the $i$th cell. 
    Then, $\hat{M}_\mu$ can be decomposed as $\bigotimes_{ j\neq i}\hat{I}^j \otimes A^i_\mu$. By also using the decomposition of $\hat U$ 
    into the $\hat U_i$ according to \eqref{decomposeU}, we can write 
    \[
    \hat U\bigotimes_{ j\neq i}\hat{I}^j \otimes A^i_\mu \hat U^\dagger=\bigotimes_{ j\neq i}\hat{I}^j \otimes \hat U_i A^i_\mu \hat U_i^\dagger, 
        \]
{due to the orthonormality condition \eqref{Acond}, $U_i A^i_\mu \hat U_i^\dagger=\sum_\xi O_{\mu\xi}A^i_\xi$ for an orthogonal matrix $O$.}
    
     We can expand the orthogonal matrix $O$ using the full space spanned by the $\hat{\bm M}$, which leads to $U^\dagger\hat{M}_\mu U=\sum_\xi O_{\mu\xi}\hat{M}_\xi$. % for the orthogonal matrix $O$. 
     Hence we can write 
     \[
     \gamma_C(U\R U^\dagger,\hat{\bm M})=O^\mathrm{T}\gamma_C(\R,\hat{\bm M})O.
     \]
    {Therefore the local driving Hamiltonians only act on the state 
     by rotating the basis of the commutation matrix $\gamma_C$, %and thus 
     and do not change its 
    eigenvalues and hence also its norm, so that 
     $\Gamma_C$ is not affected.}
     \end{proof}

 \section{{Proof of inequality \eqref{inequal_power}}} 
 \label{proof_power} %} for power}
In this Appendix, we will expand on the detailed derivation of Theorem~\ref{theorem_main}, for which 
the primary steps were outlined in the main text. Let us start with establishing a Cauchy–Schwarz type  inequality for operators.

For Hermitian operators $\hat{A}=\sum_{i}a_i\ket{a_i}\bra{a_i}$ and $\hat{B}=\sum_{i}b_i\ket{b_i}\bra{b_i}$,  the 
following Cauchy-Schwarz type inequality holds:
\bea
|\tr(\hat{A}\hat{B})|\leq\sqrt{\tr(\hat{A}^2)\tr(\hat{B}^2)}\label{CauchyIn}
\ea
which is derived by an arithmetic geometric-mean inequality as follows 
\bea
\begin{split}
%\begin{multline}
& 2
%=\sum_{ij} \frac{a_i^2}{\tr(\hat{A}^2)}|\braket{a_i}{b_j}|^2+\sum_{ij} \frac{b_i^2}{\tr(\hat{B}^2)}|\braket{a_i}{b_j}|^2\\
=\sum_{ij}\left(\frac{a_i^2}{\tr(\hat{A}^2)}+\frac{b_i^2}{\tr(\hat{B}^2)}\right)|\braket{a_i}{b_j}|^2\\
& \geq\sum_{ij} \frac{2|a_i b_j|}{\sqrt{\tr(\hat{A}^2)\tr(\hat{B}^2)}} |\braket{a_i}{b_j}|^2\geq2\frac{|\tr(\hat{A}\hat{B})|}{\sqrt{\tr(\hat{A}^2)\tr(\hat{B}^2)}}.
%\end{multline}
\end{split}
\ea

By replacing $\hat{A}$ and $\hat{B}$ in \eqref{CauchyIn} with $i[\hat{H},\sqrt{\R}]$ and $\{\sqrt{\R},\hat{V}\}$, we obtain the following bound on $P$, 
\bea
|P|&=&|\tr(i[\hat{H},\sqrt{\R}]\{\sqrt{\R},\hat{V}\})|\nonumber \\
&\leq&\sqrt{\tr(-[\hat{H},\sqrt{\R}]^2)\tr(\{\sqrt{\R},\hat{V}\}^2)},
\label{CuachyP}
\ea
which implies that $\tr(-[\R,\hat{H}]^2)$ determines the quantum advantage coming 
from the state $\R$ since we isolated the QSL contribution, encoded in $\{\sqrt{\R},\hat{V}\}^2$, 
on the right-hand side of the power inequality, cf.~the discussion below Eq.~\eqref{inequal_power_lambda}. 

We now use that the battery Hamiltonian can be written as$\hamz=\sqrt{\sum_{i=1}^N\tr(\hamz_i^2)} \hat{\bm M}\cdot {\bm u}$
%$|{\bm u}|^2$ is bounded by $\sum_{i=1}^N\tr(H_i^2)$ \col{I thought 
%$\bm u$ is normalized so from which equation, is this not just how we write $\hat H$ ??}, 
which gives the equality from the definition of $\gamma_C(\R,\hat{\bm M})$% \col{which is anyway the same as \eqref{ineqGamma}}
\bea
-\tr([\R,\hat{H}]^2)=\sum_{i=1}^N\tr(\hamz_i^2)u^\mathrm{T}\gamma_C(\R,\hat{\bm M})u.
\ea
By using the property of the operator norm that it is equal to the largest
eigenvalue of the operator, we have the inequality 
\bea
-\tr([\R,\hat{H}]^2)\leq\norm{\gamma_C(\R,\hat{\bm M})}\sum_{i=1}^N\tr(H_i^2). \label{fyou}
\ea
The quantity $\hat{V}+\lambda \hat{I}$ has identical power $P\,\,\forall\,\lambda$ and $\tr(\{\hamd+\lambda \hat{I},\sqrt{\R}\})$ is minimized by $2\tr(\sqrt{\R}\hamd\sqrt{\R}\hamd)+2\tr(\R\hamd^2)-4\tr(\R\hamd)^2$ when $\lambda=-\tr(\hamd\R)$.

Since $-\tr([\sqrt{\R},\hamd]^2)=2\tr(\sqrt{\R}\hamd\sqrt{\R}\hamd)-2\tr(\R\hamd^2)\geq 0$, we can establish the inequality 
\[
\begin{split}
&2\tr(\sqrt{\R}\hamd\sqrt{\R}\hamd)+2\tr(\R\hamd^2)-4\tr(\R\hamd)^2\\
&\leq 4\tr(\R\hamd^2)-4\tr(\R\hamd)^2=4\mean{\Delta\hamd^2}.
\end{split} \label{IneqV}
\]
Moreover, $\tr(\sqrt{\R}(\hamd+\lambda\hat{I})\sqrt{\R}(\hamd+\lambda\hat{I}))$ is greater or equal to zero 
$\forall\,\lambda$, so we obtain that $\tr(\sqrt{\R}\hamd\sqrt{\R}\hamd)-\tr(\R\hamd)^2\geq 0$. Due to the above inequality \eqref{IneqV}, $\tr(\{\sqrt{\R},\hamd\})$ is then bounded by
\[
\begin{split}
&2\mean{\Delta\hamd^2}=2\tr(\R\hamd^2)-2\tr(\R\hamd)^2\\
&\leq2\tr(\sqrt{\R}\hamd\sqrt{\R}\hamd)+2\tr(\R\hamd^2)-4\tr(\R\hamd)^2.
\end{split}
\]
Now we have $\tr(\{\sqrt{\R},\hamd\}^2)=2\kappa \mean{\Delta\hamd^2}$ {defining a
coefficient $\kappa$ which lies in the range $1\leq\kappa\leq 2$.} 

Consequently, we obtain a power bound in the form 
\[
P\leq \sqrt{2\kappa\Gamma_C(\R)\sum_{i=1}^N\tr(\hamz_i^2)\mean{\Delta\hamd^2}},
\]
the relation \eqref{inequal_power} which we set out to prove. 

Next we will discuss the condition for a saturation of the bound~\eqref{inequal_power}. The equality condition of \eqref{CauchyIn} is $\hat{A}=c\hat{B}$ for an arbitrary real number $c$. Hence the first equality condition~\eqref{eq_con_1} arises. The remaining question is which driving Hamiltonian $\hamd$ satisfies this condition. 

Let us investigate \eqref{eq_con_1} within a density matrix eigenbasis. 
In this basis, we can write 
%express  $\R$, $\hamz$ and $\hamd$, as 
$\R=\sum_\alpha \rho_\alpha \ket{\alpha}\bra{\alpha}$, $\hamz=\sum_{\alpha \beta} h_{\alpha \beta} \ket{\alpha}\bra{\beta}$ and $\hamd=\sum_{\alpha \beta} v_{\alpha \beta} \ket{\alpha}\bra{\beta}$. Then \eqref{eq_con_1} can be formulated as follows
\[
\label{eq_con1_de}
\sum_{\alpha\beta} v_{\alpha \beta}(\sqrt{\rho_\alpha}+\sqrt{\rho_\beta})\ket{\alpha}\bra{\beta}=ic\sum_{\alpha \beta} h_{\alpha \beta}(\sqrt{\rho_\beta}-\sqrt{\rho_\alpha})\ket{\alpha}\bra{\beta},
\]
for a constant $c=\sqrt{-\tr(\{\sqrt{\R},\hamd\}^2)/\tr([\sqrt{\R},\hamz]^2)}$. The solutions of \eqref{eq_con1_de} 
are given by 
\[
\label{eigen_v_ij}
v_{\alpha\beta}=ic\frac{\sqrt{\rho_\beta}-\sqrt{\rho_\alpha}}{\sqrt{\rho_\alpha}+\sqrt{\rho_\beta}}h_{\alpha\beta} \qquad \forall \, \alpha,\beta .
\]
%with any real number constant $c$. 
As a result, there always exist driving $\hamd$ satisfying the equality condition~\eqref{eq_con_1}.

{The second equality condition \eqref{eq_con_2} comes from the definition of operator norm. The driving $\hamz$ is decomposed in terms of the $\hat{\bm M}$ as $\hamz = \sqrt{\sum_i^N\tr(\hamz_i^2)}\hat{\bm M}\cdot{\bm u}$ for real normalized vectors ${\bm u}$. By the definition of $\gamma_C(\R,\hat{\bm M})$, $-\tr([\sqrt{\R},\hamz]^2)=(\sum_i^N\tr(\hamz_i^2)){\bm u}^\mathrm{T}\gamma_C(\R,\hat{\bm M}){\bm u}$. } 
We now use that 
the operator norm is equal to the maximum absolute value of all 
eigenvalues of a symmetric real matrix, 
The commutation matrix 
$\gamma_C(\R,\hat{\bm M})$ is such a real symmetric matrix, 
{so there always exists an optimal direction of ${\bm u}={\bm u}_m$ 
projecting out the largest eigenvalue of $\gamma_C$, which finally gives 
\[
\begin{split}
-\tr([\sqrt{\R},\hamz]^2)&=(\sum_i^N\tr(\hamz_i^2)){\bm u}_m^\mathrm{T}\gamma_C(\R,\hat{\bm M}){\bm u}_m\\
    &=(\sum_i^N\tr(\hamz_i^2))\Gamma_C(\R)
\end{split}
\]}\noindent Now we 
established the second equality condition~\eqref{eq_con_2} and the fact that it is always achievable for any $\R$.

\section{Proof of equation \eqref{power_eq}}\label{Ap:eq_power}

In Appendix~\ref{proof_power}, we proved that our inequality for the 
power can be saturated  
by manipulating the battery driving Hamiltonians, $\hamz$ and 
$\hamd$, respectively. However, we can not access all kinds of Hamiltonians; we only have a 
restricted choice of Hamiltonians, which renders the bound on the power~\eqref{inequal_power} loose.

To quantify how loose our bound actually is, we define two angles $\theta_{\hamz}$ and $\theta_{\hamd}$, by  Eqs.~\eqref{eq:theta_H} and \eqref{eq:theta_V}. By substituting $\theta_{\hamz}$ and $\theta_{\hamd}$ into  Eq.~\eqref{power_eq}, we can confirm it is identically fulfilled. 
 We bring Eq.~\eqref{power_eq} into the form
 \[
 \label{eq:power_in_App}
P=\sqrt{2\kappa\Gamma_C(\R)\sum_{i=1}^N\tr(\hamz_i^2)\mean{\Delta\hamd^2}\cos{\theta_{\hamd}}\cos{\theta_{\hamz}}}.
\]
By the definition of $\theta_{\hamz}$, we obtain that Eq.~\eqref{eq:power_in_App} is equivalent to 
\[\label{eq:power_in_App_2}
P=\sqrt{-2\kappa\tr([\hamz,\sqrt{\R}]^2)\mean{\Delta\hamd^2}\cos{\theta_{\hamd}}}.
\]
Because $\kappa$ is defined by $\tr(\{\sqrt{\R},\hamd\}^2)/2\mean{\Delta\hamd^2}$, the 
%which implies that 
Eq.~\eqref{eq:power_in_App_2} can be represented as
\[
P=\sqrt{-\tr([\hamz,\sqrt{\R}]^2)\tr(\{\sqrt{\R},\hamd\}^2)\cos{\theta_{\hamd}}},
\]
{which is, as required, identically true:}
\[
\tr(i[\hamz,\sqrt{\R}]\{\sqrt{\R},\hamd\})=\tr(i\hamz[\R,\hamd])\coloneqq P,
\]
by using the definition of $\theta_{\hamd}$ and the cyclic property of the trace.

    \begin{widetext}
\section{Quantum state advantage $\Gamma_C$ for W and $k$-locally entangled states}
\label{chargadv}
{In the main text, we have provided the quantum advantage of pure states 
%$\Gamma_C$ 
for several examples. We detail here the corresponding derivations.}

\subsection{W states}
We first prove for generalized W states that $\Gamma_C(\ket{W_N})$ is the function $\Gamma_C(\ket{W_N})=(3N-1)/N$. 
Since W states only occupy a finite number of $N$ Fock space states, this %expression 
can be analytically derived.
The orthonormal set \eqref{M} 
for noninteracting batteries is given by ($1\leq i,j \leq N$) 
\[
\begin{split}
    \hat{\bm M}= %\\
    \left\{\frac{1}{\sqrt{2}} \sigma_i^x\bigotimes_{j\neq i}\hat{I}_j,\frac{1}{\sqrt{2}}\sigma_i^y\bigotimes_{j\neq i}\hat{I}_j,\frac{1}{\sqrt{2}}\sigma_i^z\bigotimes_{j\neq i}\hat{I}_j,\frac{1}{\sqrt{2}}\hat{I}_i\bigotimes_{j\neq i}\hat{I}_j\right\}.
\end{split}
\]
We define a set of matrices 
$\gamma^\sigma (1\leq \sigma\leq 4)$ as follows 
\[
\gamma^\sigma_{ij}\vcentcolon  =\gamma_C(\ket{W_N},\hat{\bm M})_{3i+\sigma,3j+\sigma}, 
\]
%\pagebreak 
\noindent where the $i,j=1,\ldots,N$ are site indices as previously. 
We then have  $\gamma_C(\ket{W_N},\hat{\bm M})=\gamma^1 \bigoplus \gamma^2\bigoplus \gamma^3\bigoplus
\gamma^4$, since the $\gamma_C(\ket{W_N},\hat{\bm M})_{3i+k,3j+l}$ vanish $\forall\, i,j$ when 
$k\neq l$.

The four matrices  $\gamma^\sigma$ have the following elements %properties 
\bea
&\gamma^1_{ij}=\gamma^2_{ij}=
&\begin{cases}
1\quad&\mathrm{for\, \, }i=j\\
\frac{2}{N}\quad&\mathrm{for}\,\, i\neq j
\end{cases}, %\\
\quad \qquad \gamma^3_{ij}=
\begin{cases}
\frac{4N-4}{N^2}\quad&\mathrm{for}\,i=j\\
\frac{4}{N^2}\quad&\mathrm{for}\,i\neq j
\end{cases}, \qquad \quad %\\
\gamma^4_{ij}=
0,
\ea
\end{widetext}
from which we can readily obtain their eigenvalues. By the definition of $\Gamma_C$, it then follows 
\[
\begin{split}\label{Gamma_W_N} 
    &\Gamma_C(\ket{W_N})=\|\gamma_C(\ket{W_N},\hat{\bm M})\|\\
    &=\max(\|\gamma^1\|,\|\gamma^2\|,\|\gamma^3\|)=\frac{3N-2}{N}.
\end{split}
\]
We have confirmed this analytical result by direct numerical evaluation of $\Gamma_C(\ket{W_N})$.  

%{Fig~\ref{plot_w_states} show that our analytic result perfectly matches numerical realizations.}
\subsection{$k$-locally entangled states}
%{We also discuss briefly the case of $k$-local entanglement. 
{As stated in the main text, a $k$-local entangled 
system is divided into partitions in which {at most $k$} 
cells are entangled within each partition, but the {cell states} in different partitions
are not entangled among each other. 
Formally, this assumption {on the locality of entanglement} is {therefore} expressed as
\[
\ket{\psi}=\bigotimes_{p=1}^P\ket{\psi_p},
\]
where all $\ket{\psi_p}$ are composed from $k$ cells at most and $p$ labels all possible 
partitions. } % which contain at most $k$ cells.}

{In the present case, $\gamma_C(\psi)$ can be represented by a simple sum 
of the commutation matrices from each partition, 
\[
\gamma_C(\psi)=\bigoplus_{p=1}^P \gamma_C(\psi_p),
\]
because of the given assumption that there is no entanglement between cells from different partitions. 
We finally obtain  %\\ %\newpage
\[\label{Gamma_C_k} 
\Gamma_C(\ket{\psi})=\|\gamma_C(\ket{\psi})\|=\max_{p}\|
\gamma_C(\ket{\psi_p})\|\leq k,
\]
since each $\|\gamma_C(\ket{\psi_p})\|$ is less than or equal to the number of cells in the given partition.
}

\section{{Maximally entangled quantum chaos implies $\Gamma_C=1$}}\label{fchaos}

{The SYK charging operator %Hamiltonian 
generates many-body chaos with a large amount of entanglement~\cite{PhysRevLett.126.030602}. 
We assume in the following that the final 
chaotic states, {$\ket{\psi_C}$} have maximal entanglement entropy for any partition, such that the entanglement entropy for the partition $A$, which has a 
$d_A$-dimensional Hilbert space, reads }
{\[
-\tr(\R_A\ln(\R_A))=\ln(d_A),
\]
with the reduced density matrix, traced over the complement of $A$
\[
\R_A=\tr_{\neg A}(\pro{\psi_C}{\psi_C}). 
\]
To maximize the entanglement entropy, the mixed state, $\R_A$ should be a maximally mixed state, 
such that
\[
\R_A=\hat{I}/d_A. \label{assumption} 
\]
This implies that a mixed state for a given single cell has density matrix $\R_i=\hat{I}/2$ and a 
mixed state for two cells has $\R_{ij}= \hat{I}/4\,\forall i\neq  j$ since the cells in our case are qubits. }

{By definition $\hat{M}_\mu$ is an element of $\hat{\bm M}^i$ for some $1\leq i\leq N$. Hence, we can assume $\hat{M}_\mu=\hat{A}_\alpha^i\bigotimes_{k\neq i}\hat{I}^k$ and $\hat{M}_\nu=\hat{A}_\beta^j\bigotimes_{k\neq j}\hat{I}^k$ without loss of generality. When $i\neq j$, we obtain that
\[\begin{split}
\gamma_C(\pro{\psi_C}{\psi_C},\hat{\bm M})_{\mu\nu}\!\!=\!\tr([\hat{M_\mu},\pro{\psi_C}{\psi_C}][\hat{M_\nu},\pro{\psi_C}{\psi_C}]),
\end{split}
\]
{because of the pure state property}  $\sqrt{\pro{\psi_C}{\psi_C}}=\pro{\psi_C}{\psi_C}$.
We trace out all sites $k\neq i,j$, % before calculate it, 
which yields 
\[\begin{split}\label{eq:gamma_chaos}
\gamma_C(\pro{\psi_C}{\psi_C},\hat{\bm M})_{\mu\nu}\!=2\tr(\hat{A}_\alpha^i\otimes\hat{A}_\beta^j \R_{ij})\\
-2\tr(\hat{A}_\alpha^i\otimes\hat{I}^j \R_{ij})\tr(\hat{I}^i\otimes\hat{A}_\beta^j \R_{ij})
\end{split}
\]
By using our assumption \eqref{assumption}, $\R_{ij}=\hat{I}/4$, 
Eq.~\eqref{eq:gamma_chaos} becomes 
\[\begin{split}
    &\gamma_C(\pro{\psi_C}{\psi_C},\hat{\bm M})_{\mu\nu}\\
    &=\tr(\hat{A}_\alpha^i\otimes\hat{A}_\beta^j)/2-\tr(\hat{A}_\alpha^i\otimes\hat{I}^j )\tr(\hat{I}^i\otimes\hat{A}_\beta^j)/2=0.
\end{split}
\]}

{When $i=j$, we trace out all $k\neq i$ such that
\[\begin{split}
    \gamma_C(\pro{\psi_C}{\psi_C},\hat{\bm M})_{\mu\nu}\!=\!2\tr(\hat{A}_\alpha^i\hat{A}_\beta^i \R_{i})\!-2\tr(\hat{A}_\alpha^i\R_{i})\tr(\hat{A}_\beta^i \R_{i}).
\end{split}
\]
By using the assumption \eqref{assumption}, $\R_i=\hat{I}/2$,
\[
\begin{split}
\label{eq:chaos_end}
        \gamma_C(\pro{\psi_C}{\psi_C},\hat{\bm M})_{\mu\nu}\!=2\tr(\hat{A}_\alpha^i\hat{A}_\beta^i )-\tr(\hat{A}_\alpha^i)\tr(\hat{A}_\beta^i),
\end{split}
\]
which identically vanishes when $\alpha\neq\beta$ by the orthogonality condition \eqref{Acond}  
fulfilled by the $\hat{A}_\alpha$.}

{Now the $\alpha=\beta$ case remains to be assessed. 
For a two-level system, the set of the $\hat{A}_\alpha$ is composed of {normalized 
 Pauli matrices}, $\sigma_i^x/\sqrt{2}, \sigma^y/\sqrt{2},\sigma^z/\sqrt{2}$ and the identity operator $\hat{I}/\sqrt{2}$. When $\hat{A}_\alpha^i$ is a {normalized Pauli matrix}, Eq.~\eqref{eq:chaos_end} yields unity %, while Eq.~\eqref{eq:chaos_end} yields
and it gives zero for $\hat{A}_\alpha^i=\hat{I}/\sqrt{2}$.}

{Consequently, $\gamma_C(\pro{\psi_C}{\psi_C},\hat{\bm M})_{\mu\nu}$ is a diagonal matrix with elements unity or zero. Hence $\Gamma_C{\pro{\psi_C}{\psi_C}}=\|\gamma_C(\pro{\psi_C}{\psi_C},\hat{\bm M})_{\mu\nu}\|$ is equal to unity, which proves the assertion stated in the section header. }

\end{appendix}
\bibliography{ep22}% Produces the bibliography via BibTeX. 

%merlin.mbs apsrev4-1.bst 2010-07-25 4.21a (PWD, AO, DPC) hacked
%Control: key (0)
%Control: author (0) dotless jnrlst
%Control: editor formatted (1) identically to author
%Control: production of article title (0) allowed
%Control: page (1) range
%Control: year (0) verbatim
%Control: production of eprint (0) enabled
\begin{thebibliography}{46}%
\makeatletter
\providecommand \@ifxundefined [1]{%
 \@ifx{#1\undefined}
}%
\providecommand \@ifnum [1]{%
 \ifnum #1\expandafter \@firstoftwo
 \else \expandafter \@secondoftwo
 \fi
}%
\providecommand \@ifx [1]{%
 \ifx #1\expandafter \@firstoftwo
 \else \expandafter \@secondoftwo
 \fi
}%
\providecommand \natexlab [1]{#1}%
\providecommand \enquote  [1]{``#1''}%
\providecommand \bibnamefont  [1]{#1}%
\providecommand \bibfnamefont [1]{#1}%
\providecommand \citenamefont [1]{#1}%
\providecommand \href@noop [0]{\@secondoftwo}%
\providecommand \href [0]{\begingroup \@sanitize@url \@href}%
\providecommand \@href[1]{\@@startlink{#1}\@@href}%
\providecommand \@@href[1]{\endgroup#1\@@endlink}%
\providecommand \@sanitize@url [0]{\catcode `\\12\catcode `\$12\catcode
  `\&12\catcode `\#12\catcode `\^12\catcode `\_12\catcode `\%12\relax}%
\providecommand \@@startlink[1]{}%
\providecommand \@@endlink[0]{}%
\providecommand \url  [0]{\begingroup\@sanitize@url \@url }%
\providecommand \@url [1]{\endgroup\@href {#1}{\urlprefix }}%
\providecommand \urlprefix  [0]{URL }%
\providecommand \Eprint [0]{\href }%
\providecommand \doibase [0]{http://dx.doi.org/}%
\providecommand \selectlanguage [0]{\@gobble}%
\providecommand \bibinfo  [0]{\@secondoftwo}%
\providecommand \bibfield  [0]{\@secondoftwo}%
\providecommand \translation [1]{[#1]}%
\providecommand \BibitemOpen [0]{}%
\providecommand \bibitemStop [0]{}%
\providecommand \bibitemNoStop [0]{.\EOS\space}%
\providecommand \EOS [0]{\spacefactor3000\relax}%
\providecommand \BibitemShut  [1]{\csname bibitem#1\endcsname}%
\let\auto@bib@innerbib\@empty
%</preamble>
\bibitem [{\citenamefont {et~al.}(2020)}]{Pan}%
  \BibitemOpen
  \bibfield  {author} {\bibinfo {author} {\bibfnamefont {Han-Sen~Zhong}\
  \bibnamefont {et~al.}},\ }\bibfield  {title} {\enquote {\bibinfo {title}
  {Quantum computational advantage using photons},}\ }\href {\doibase
  10.1126/science.abe8770} {\bibfield  {journal} {\bibinfo  {journal}
  {Science}\ }\textbf {\bibinfo {volume} {370}},\ \bibinfo {pages} {1460--1463}
  (\bibinfo {year} {2020})}\BibitemShut {NoStop}%
\bibitem [{\citenamefont {Madsen~et al.}(2022)}]{Madsen}%
  \BibitemOpen
  \bibfield  {author} {\bibinfo {author} {\bibfnamefont {Lars~S.}\ \bibnamefont
  {Madsen~et al.}},\ }\bibfield  {title} {\enquote {\bibinfo {title} {Quantum
  computational advantage with a programmable photonic processor},}\ }\href
  {\doibase 10.1038/s41586-022-04725-x} {\bibfield  {journal} {\bibinfo
  {journal} {Nature}\ }\textbf {\bibinfo {volume} {606}},\ \bibinfo {pages}
  {75--81} (\bibinfo {year} {2022})}\BibitemShut {NoStop}%
\bibitem [{\citenamefont {Giovannetti}\ \emph {et~al.}(2004)\citenamefont
  {Giovannetti}, \citenamefont {Lloyd},\ and\ \citenamefont
  {Maccone}}]{Lloyd_Science}%
  \BibitemOpen
  \bibfield  {author} {\bibinfo {author} {\bibfnamefont {Vittorio}\
  \bibnamefont {Giovannetti}}, \bibinfo {author} {\bibfnamefont {Seth}\
  \bibnamefont {Lloyd}}, \ and\ \bibinfo {author} {\bibfnamefont {Lorenzo}\
  \bibnamefont {Maccone}},\ }\bibfield  {title} {\enquote {\bibinfo {title}
  {{Quantum-Enhanced Measurements: Beating the Standard Quantum Limit}},}\
  }\href {\doibase 10.1126/science.1104149} {\bibfield  {journal} {\bibinfo
  {journal} {Science}\ }\textbf {\bibinfo {volume} {306}},\ \bibinfo {pages}
  {1330--1336} (\bibinfo {year} {2004})}\BibitemShut {NoStop}%
\bibitem [{\citenamefont {Braun}\ \emph {et~al.}(2018)\citenamefont {Braun},
  \citenamefont {Adesso}, \citenamefont {Benatti}, \citenamefont {Floreanini},
  \citenamefont {Marzolino}, \citenamefont {Mitchell},\ and\ \citenamefont
  {Pirandola}}]{Braun}%
  \BibitemOpen
  \bibfield  {author} {\bibinfo {author} {\bibfnamefont {Daniel}\ \bibnamefont
  {Braun}}, \bibinfo {author} {\bibfnamefont {Gerardo}\ \bibnamefont {Adesso}},
  \bibinfo {author} {\bibfnamefont {Fabio}\ \bibnamefont {Benatti}}, \bibinfo
  {author} {\bibfnamefont {Roberto}\ \bibnamefont {Floreanini}}, \bibinfo
  {author} {\bibfnamefont {Ugo}\ \bibnamefont {Marzolino}}, \bibinfo {author}
  {\bibfnamefont {Morgan~W.}\ \bibnamefont {Mitchell}}, \ and\ \bibinfo
  {author} {\bibfnamefont {Stefano}\ \bibnamefont {Pirandola}},\ }\bibfield
  {title} {\enquote {\bibinfo {title} {Quantum-enhanced measurements without
  entanglement},}\ }\href {\doibase 10.1103/RevModPhys.90.035006} {\bibfield
  {journal} {\bibinfo  {journal} {Rev. Mod. Phys.}\ }\textbf {\bibinfo {volume}
  {90}},\ \bibinfo {pages} {035006} (\bibinfo {year} {2018})}\BibitemShut
  {NoStop}%
\bibitem [{\citenamefont {Collaboration}(2011)}]{Abadie}%
  \BibitemOpen
  \bibfield  {author} {\bibinfo {author} {\bibfnamefont {The LIGO~Scientific}\
  \bibnamefont {Collaboration}},\ }\bibfield  {title} {\enquote {\bibinfo
  {title} {A gravitational wave observatory operating beyond the quantum
  shot-noise limit},}\ }\href {\doibase 10.1038/nphys2083} {\bibfield
  {journal} {\bibinfo  {journal} {Nature Physics}\ }\textbf {\bibinfo {volume}
  {7}},\ \bibinfo {pages} {962--965} (\bibinfo {year} {2011})}\BibitemShut
  {NoStop}%
\bibitem [{\citenamefont {Gisin}\ \emph {et~al.}(2002)\citenamefont {Gisin},
  \citenamefont {Ribordy}, \citenamefont {Tittel},\ and\ \citenamefont
  {Zbinden}}]{RevModPhys.74.145}%
  \BibitemOpen
  \bibfield  {author} {\bibinfo {author} {\bibfnamefont {Nicolas}\ \bibnamefont
  {Gisin}}, \bibinfo {author} {\bibfnamefont {Gr\'egoire}\ \bibnamefont
  {Ribordy}}, \bibinfo {author} {\bibfnamefont {Wolfgang}\ \bibnamefont
  {Tittel}}, \ and\ \bibinfo {author} {\bibfnamefont {Hugo}\ \bibnamefont
  {Zbinden}},\ }\bibfield  {title} {\enquote {\bibinfo {title} {Quantum
  cryptography},}\ }\href {\doibase 10.1103/RevModPhys.74.145} {\bibfield
  {journal} {\bibinfo  {journal} {Rev. Mod. Phys.}\ }\textbf {\bibinfo {volume}
  {74}},\ \bibinfo {pages} {145--195} (\bibinfo {year} {2002})}\BibitemShut
  {NoStop}%
\bibitem [{\citenamefont {Gisin}\ and\ \citenamefont {Thew}(2007)}]{Gisin}%
  \BibitemOpen
  \bibfield  {author} {\bibinfo {author} {\bibfnamefont {Nicolas}\ \bibnamefont
  {Gisin}}\ and\ \bibinfo {author} {\bibfnamefont {Rob}\ \bibnamefont {Thew}},\
  }\bibfield  {title} {\enquote {\bibinfo {title} {Quantum communication},}\
  }\href {\doibase 10.1038/nphoton.2007.22} {\bibfield  {journal} {\bibinfo
  {journal} {Nature Photonics}\ }\textbf {\bibinfo {volume} {1}},\ \bibinfo
  {pages} {165--171} (\bibinfo {year} {2007})}\BibitemShut {NoStop}%
\bibitem [{\citenamefont {Auff\`eves}(2022)}]{Auffeves}%
  \BibitemOpen
  \bibfield  {author} {\bibinfo {author} {\bibfnamefont {Alexia}\ \bibnamefont
  {Auff\`eves}},\ }\bibfield  {title} {\enquote {\bibinfo {title} {{Quantum
  Technologies Need a Quantum Energy Initiative}},}\ }\href {\doibase
  10.1103/PRXQuantum.3.020101} {\bibfield  {journal} {\bibinfo  {journal} {PRX
  Quantum}\ }\textbf {\bibinfo {volume} {3}},\ \bibinfo {pages} {020101}
  (\bibinfo {year} {2022})}\BibitemShut {NoStop}%
\bibitem [{\citenamefont {Binder}\ \emph {et~al.}(2015)\citenamefont {Binder},
  \citenamefont {Vinjanampathy}, \citenamefont {Modi},\ and\ \citenamefont
  {Goold}}]{Binder_2015}%
  \BibitemOpen
  \bibfield  {author} {\bibinfo {author} {\bibfnamefont {Felix~C.}\
  \bibnamefont {Binder}}, \bibinfo {author} {\bibfnamefont {Sai}\ \bibnamefont
  {Vinjanampathy}}, \bibinfo {author} {\bibfnamefont {Kavan}\ \bibnamefont
  {Modi}}, \ and\ \bibinfo {author} {\bibfnamefont {John}\ \bibnamefont
  {Goold}},\ }\bibfield  {title} {\enquote {\bibinfo {title} {Quantacell:
  powerful charging of quantum batteries},}\ }\href {\doibase
  10.1088/1367-2630/17/7/075015} {\bibfield  {journal} {\bibinfo  {journal}
  {New Journal of Physics}\ }\textbf {\bibinfo {volume} {17}},\ \bibinfo
  {pages} {075015} (\bibinfo {year} {2015})}\BibitemShut {NoStop}%
\bibitem [{\citenamefont {Campaioli}\ \emph {et~al.}(2017)\citenamefont
  {Campaioli}, \citenamefont {Pollock}, \citenamefont {Binder}, \citenamefont
  {C\'eleri}, \citenamefont {Goold}, \citenamefont {Vinjanampathy},\ and\
  \citenamefont {Modi}}]{PhysRevLett.118.150601}%
  \BibitemOpen
  \bibfield  {author} {\bibinfo {author} {\bibfnamefont {Francesco}\
  \bibnamefont {Campaioli}}, \bibinfo {author} {\bibfnamefont {Felix~A.}\
  \bibnamefont {Pollock}}, \bibinfo {author} {\bibfnamefont {Felix~C.}\
  \bibnamefont {Binder}}, \bibinfo {author} {\bibfnamefont {Lucas}\
  \bibnamefont {C\'eleri}}, \bibinfo {author} {\bibfnamefont {John}\
  \bibnamefont {Goold}}, \bibinfo {author} {\bibfnamefont {Sai}\ \bibnamefont
  {Vinjanampathy}}, \ and\ \bibinfo {author} {\bibfnamefont {Kavan}\
  \bibnamefont {Modi}},\ }\bibfield  {title} {\enquote {\bibinfo {title}
  {{Enhancing the Charging Power of Quantum Batteries}},}\ }\href {\doibase
  10.1103/PhysRevLett.118.150601} {\bibfield  {journal} {\bibinfo  {journal}
  {Phys. Rev. Lett.}\ }\textbf {\bibinfo {volume} {118}},\ \bibinfo {pages}
  {150601} (\bibinfo {year} {2017})}\BibitemShut {NoStop}%
\bibitem [{\citenamefont {Campaioli}\ \emph {et~al.}(2018)\citenamefont
  {Campaioli}, \citenamefont {Pollock},\ and\ \citenamefont
  {Vinjanampathy}}]{Campaioli2018}%
  \BibitemOpen
  \bibfield  {author} {\bibinfo {author} {\bibfnamefont {Francesco}\
  \bibnamefont {Campaioli}}, \bibinfo {author} {\bibfnamefont {Felix~A.}\
  \bibnamefont {Pollock}}, \ and\ \bibinfo {author} {\bibfnamefont {Sai}\
  \bibnamefont {Vinjanampathy}},\ }\bibfield  {title} {\enquote {\bibinfo
  {title} {{Quantum Batteries}},}\ }in\ \href {\doibase
  10.1007/978-3-319-99046-0_8} {\emph {\bibinfo {booktitle} {Fundamental
  Theories of Physics}}}\ (\bibinfo  {publisher} {Springer International
  Publishing},\ \bibinfo {year} {2018})\ pp.\ \bibinfo {pages}
  {207--225}\BibitemShut {NoStop}%
\bibitem [{\citenamefont {Carrasco}\ \emph {et~al.}(2022)\citenamefont
  {Carrasco}, \citenamefont {Maze}, \citenamefont {Hermann-Avigliano},\ and\
  \citenamefont {Barra}}]{PhysRevE.105.064119}%
  \BibitemOpen
  \bibfield  {author} {\bibinfo {author} {\bibfnamefont {Javier}\ \bibnamefont
  {Carrasco}}, \bibinfo {author} {\bibfnamefont {Jer\'onimo~R.}\ \bibnamefont
  {Maze}}, \bibinfo {author} {\bibfnamefont {Carla}\ \bibnamefont
  {Hermann-Avigliano}}, \ and\ \bibinfo {author} {\bibfnamefont {Felipe}\
  \bibnamefont {Barra}},\ }\bibfield  {title} {\enquote {\bibinfo {title}
  {Collective enhancement in dissipative quantum batteries},}\ }\href {\doibase
  10.1103/PhysRevE.105.064119} {\bibfield  {journal} {\bibinfo  {journal}
  {Phys. Rev. E}\ }\textbf {\bibinfo {volume} {105}},\ \bibinfo {pages}
  {064119} (\bibinfo {year} {2022})}\BibitemShut {NoStop}%
\bibitem [{\citenamefont {Rossini}\ \emph {et~al.}(2020)\citenamefont
  {Rossini}, \citenamefont {Andolina}, \citenamefont {Rosa}, \citenamefont
  {Carrega},\ and\ \citenamefont {Polini}}]{Dario}%
  \BibitemOpen
  \bibfield  {author} {\bibinfo {author} {\bibfnamefont {Davide}\ \bibnamefont
  {Rossini}}, \bibinfo {author} {\bibfnamefont {Gian~Marcello}\ \bibnamefont
  {Andolina}}, \bibinfo {author} {\bibfnamefont {Dario}\ \bibnamefont {Rosa}},
  \bibinfo {author} {\bibfnamefont {Matteo}\ \bibnamefont {Carrega}}, \ and\
  \bibinfo {author} {\bibfnamefont {Marco}\ \bibnamefont {Polini}},\ }\bibfield
   {title} {\enquote {\bibinfo {title} {{Quantum Advantage in the Charging
  Process of Sachdev-Ye-Kitaev Batteries}},}\ }\href {\doibase
  10.1103/PhysRevLett.125.236402} {\bibfield  {journal} {\bibinfo  {journal}
  {Phys. Rev. Lett.}\ }\textbf {\bibinfo {volume} {125}},\ \bibinfo {pages}
  {236402} (\bibinfo {year} {2020})}\BibitemShut {NoStop}%
\bibitem [{\citenamefont {Gyhm}\ \emph {et~al.}(2022)\citenamefont {Gyhm},
  \citenamefont {\ifmmode~\check{S}\else \v{S}\fi{}afr\'anek},\ and\
  \citenamefont {Rosa}}]{PhysRevLett.128.140501}%
  \BibitemOpen
  \bibfield  {author} {\bibinfo {author} {\bibfnamefont {Ju-Yeon}\ \bibnamefont
  {Gyhm}}, \bibinfo {author} {\bibfnamefont {Dominik}\ \bibnamefont
  {\ifmmode~\check{S}\else \v{S}\fi{}afr\'anek}}, \ and\ \bibinfo {author}
  {\bibfnamefont {Dario}\ \bibnamefont {Rosa}},\ }\bibfield  {title} {\enquote
  {\bibinfo {title} {{Quantum Charging Advantage Cannot Be Extensive without
  Global Operations}},}\ }\href {\doibase 10.1103/PhysRevLett.128.140501}
  {\bibfield  {journal} {\bibinfo  {journal} {Phys. Rev. Lett.}\ }\textbf
  {\bibinfo {volume} {128}},\ \bibinfo {pages} {140501} (\bibinfo {year}
  {2022})}\BibitemShut {NoStop}%
\bibitem [{\citenamefont {Alicki}\ and\ \citenamefont {Fannes}(2013)}]{Alicki}%
  \BibitemOpen
  \bibfield  {author} {\bibinfo {author} {\bibfnamefont {Robert}\ \bibnamefont
  {Alicki}}\ and\ \bibinfo {author} {\bibfnamefont {Mark}\ \bibnamefont
  {Fannes}},\ }\bibfield  {title} {\enquote {\bibinfo {title} {Entanglement
  boost for extractable work from ensembles of quantum batteries},}\ }\href
  {\doibase 10.1103/PhysRevE.87.042123} {\bibfield  {journal} {\bibinfo
  {journal} {Phys. Rev. E}\ }\textbf {\bibinfo {volume} {87}},\ \bibinfo
  {pages} {042123} (\bibinfo {year} {2013})}\BibitemShut {NoStop}%
\bibitem [{\citenamefont {Hovhannisyan}\ \emph {et~al.}(2013)\citenamefont
  {Hovhannisyan}, \citenamefont {Perarnau-Llobet}, \citenamefont {Huber},\ and\
  \citenamefont {Ac\'{\i}n}}]{PhysRevLett.111.240401}%
  \BibitemOpen
  \bibfield  {author} {\bibinfo {author} {\bibfnamefont {Karen~V.}\
  \bibnamefont {Hovhannisyan}}, \bibinfo {author} {\bibfnamefont {Mart\'{\i}}\
  \bibnamefont {Perarnau-Llobet}}, \bibinfo {author} {\bibfnamefont {Marcus}\
  \bibnamefont {Huber}}, \ and\ \bibinfo {author} {\bibfnamefont {Antonio}\
  \bibnamefont {Ac\'{\i}n}},\ }\bibfield  {title} {\enquote {\bibinfo {title}
  {{Entanglement Generation is Not Necessary for Optimal Work Extraction}},}\
  }\href {\doibase 10.1103/PhysRevLett.111.240401} {\bibfield  {journal}
  {\bibinfo  {journal} {Phys. Rev. Lett.}\ }\textbf {\bibinfo {volume} {111}},\
  \bibinfo {pages} {240401} (\bibinfo {year} {2013})}\BibitemShut {NoStop}%
\bibitem [{\citenamefont {Garc\'{\i}a-Pintos}\ \emph
  {et~al.}(2020)\citenamefont {Garc\'{\i}a-Pintos}, \citenamefont {Hamma},\
  and\ \citenamefont {del Campo}}]{Hamma}%
  \BibitemOpen
  \bibfield  {author} {\bibinfo {author} {\bibfnamefont {Luis~Pedro}\
  \bibnamefont {Garc\'{\i}a-Pintos}}, \bibinfo {author} {\bibfnamefont
  {Alioscia}\ \bibnamefont {Hamma}}, \ and\ \bibinfo {author} {\bibfnamefont
  {Adolfo}\ \bibnamefont {del Campo}},\ }\bibfield  {title} {\enquote {\bibinfo
  {title} {{Fluctuations in Extractable Work Bound the Charging Power of
  Quantum Batteries}},}\ }\href {\doibase 10.1103/PhysRevLett.125.040601}
  {\bibfield  {journal} {\bibinfo  {journal} {Phys. Rev. Lett.}\ }\textbf
  {\bibinfo {volume} {125}},\ \bibinfo {pages} {040601} (\bibinfo {year}
  {2020})}\BibitemShut {NoStop}%
\bibitem [{\citenamefont {Barra}(2019)}]{Barra}%
  \BibitemOpen
  \bibfield  {author} {\bibinfo {author} {\bibfnamefont {Felipe}\ \bibnamefont
  {Barra}},\ }\bibfield  {title} {\enquote {\bibinfo {title} {{Dissipative
  Charging of a Quantum Battery}},}\ }\href {\doibase
  10.1103/PhysRevLett.122.210601} {\bibfield  {journal} {\bibinfo  {journal}
  {Phys. Rev. Lett.}\ }\textbf {\bibinfo {volume} {122}},\ \bibinfo {pages}
  {210601} (\bibinfo {year} {2019})}\BibitemShut {NoStop}%
\bibitem [{\citenamefont {Shi}\ \emph {et~al.}(2022)\citenamefont {Shi},
  \citenamefont {Ding}, \citenamefont {Wan}, \citenamefont {Wang},\ and\
  \citenamefont {Yang}}]{PhysRevLett.129.130602}%
  \BibitemOpen
  \bibfield  {author} {\bibinfo {author} {\bibfnamefont {Hai-Long}\
  \bibnamefont {Shi}}, \bibinfo {author} {\bibfnamefont {Shu}\ \bibnamefont
  {Ding}}, \bibinfo {author} {\bibfnamefont {Qing-Kun}\ \bibnamefont {Wan}},
  \bibinfo {author} {\bibfnamefont {Xiao-Hui}\ \bibnamefont {Wang}}, \ and\
  \bibinfo {author} {\bibfnamefont {Wen-Li}\ \bibnamefont {Yang}},\ }\bibfield
  {title} {\enquote {\bibinfo {title} {{Entanglement, Coherence, and
  Extractable Work in Quantum Batteries}},}\ }\href {\doibase
  10.1103/PhysRevLett.129.130602} {\bibfield  {journal} {\bibinfo  {journal}
  {Phys. Rev. Lett.}\ }\textbf {\bibinfo {volume} {129}},\ \bibinfo {pages}
  {130602} (\bibinfo {year} {2022})}\BibitemShut {NoStop}%
\bibitem [{\citenamefont {\ifmmode~\check{S}\else \v{S}\fi{}afr\'anek}\ \emph
  {et~al.}(2023)\citenamefont {\ifmmode~\check{S}\else \v{S}\fi{}afr\'anek},
  \citenamefont {Rosa},\ and\ \citenamefont {Binder}}]{PhysRevLett.130.210401}%
  \BibitemOpen
  \bibfield  {author} {\bibinfo {author} {\bibfnamefont {Dominik}\ \bibnamefont
  {\ifmmode~\check{S}\else \v{S}\fi{}afr\'anek}}, \bibinfo {author}
  {\bibfnamefont {Dario}\ \bibnamefont {Rosa}}, \ and\ \bibinfo {author}
  {\bibfnamefont {Felix~C.}\ \bibnamefont {Binder}},\ }\bibfield  {title}
  {\enquote {\bibinfo {title} {{Work Extraction from Unknown Quantum
  Sources}},}\ }\href {\doibase 10.1103/PhysRevLett.130.210401} {\bibfield
  {journal} {\bibinfo  {journal} {Phys. Rev. Lett.}\ }\textbf {\bibinfo
  {volume} {130}},\ \bibinfo {pages} {210401} (\bibinfo {year}
  {2023})}\BibitemShut {NoStop}%
\bibitem [{\citenamefont {Bhattacharjee}\ and\ \citenamefont
  {Dutta}(2021)}]{Dutta}%
  \BibitemOpen
  \bibfield  {author} {\bibinfo {author} {\bibfnamefont {Sourav}\ \bibnamefont
  {Bhattacharjee}}\ and\ \bibinfo {author} {\bibfnamefont {Amit}\ \bibnamefont
  {Dutta}},\ }\bibfield  {title} {\enquote {\bibinfo {title} {Quantum thermal
  machines and batteries},}\ }\href {\doibase 10.1140/epjb/s10051-021-00235-3}
  {\bibfield  {journal} {\bibinfo  {journal} {The European Physical Journal B}\
  }\textbf {\bibinfo {volume} {94}},\ \bibinfo {pages} {239} (\bibinfo {year}
  {2021})}\BibitemShut {NoStop}%
\bibitem [{\citenamefont {et~al.}(2022{\natexlab{a}})}]{quachshort}%
  \BibitemOpen
  \bibfield  {author} {\bibinfo {author} {\bibfnamefont {James Q.~Quach}\
  \bibnamefont {et~al.}},\ }\bibfield  {title} {\enquote {\bibinfo {title}
  {{Superabsorption in an organic microcavity: Toward a quantum battery}},}\
  }\href {\doibase 10.1126/sciadv.abk3160} {\bibfield  {journal} {\bibinfo
  {journal} {Science Advances}\ }\textbf {\bibinfo {volume} {8}},\ \bibinfo
  {pages} {eabk3160} (\bibinfo {year} {2022}{\natexlab{a}})}\BibitemShut
  {NoStop}%
\bibitem [{\citenamefont {et~al.}(2022{\natexlab{b}})}]{Hu_2022}%
  \BibitemOpen
  \bibfield  {author} {\bibinfo {author} {\bibfnamefont {Chang-Kang~Hu}\
  \bibnamefont {et~al.}},\ }\bibfield  {title} {\enquote {\bibinfo {title}
  {Optimal charging of a superconducting quantum battery},}\ }\href {\doibase
  10.1088/2058-9565/ac8444} {\bibfield  {journal} {\bibinfo  {journal} {Quantum
  Science and Technology}\ }\textbf {\bibinfo {volume} {7}},\ \bibinfo {pages}
  {045018} (\bibinfo {year} {2022}{\natexlab{b}})}\BibitemShut {NoStop}%
\bibitem [{\citenamefont {Joshi}\ and\ \citenamefont {Mahesh}(2022)}]{Joshi}%
  \BibitemOpen
  \bibfield  {author} {\bibinfo {author} {\bibfnamefont {Jitendra}\
  \bibnamefont {Joshi}}\ and\ \bibinfo {author} {\bibfnamefont {T.~S.}\
  \bibnamefont {Mahesh}},\ }\bibfield  {title} {\enquote {\bibinfo {title}
  {{Experimental investigation of a quantum battery using star-topology NMR
  spin systems}},}\ }\href {\doibase 10.1103/PhysRevA.106.042601} {\bibfield
  {journal} {\bibinfo  {journal} {Phys. Rev. A}\ }\textbf {\bibinfo {volume}
  {106}},\ \bibinfo {pages} {042601} (\bibinfo {year} {2022})}\BibitemShut
  {NoStop}%
\bibitem [{\citenamefont {Cruz}\ \emph {et~al.}(2022)\citenamefont {Cruz},
  \citenamefont {Anka}, \citenamefont {Reis}, \citenamefont {Bachelard},\ and\
  \citenamefont {Santos}}]{Cruz}%
  \BibitemOpen
  \bibfield  {author} {\bibinfo {author} {\bibfnamefont {Clebson}\ \bibnamefont
  {Cruz}}, \bibinfo {author} {\bibfnamefont {Maron~F.}\ \bibnamefont {Anka}},
  \bibinfo {author} {\bibfnamefont {Mario~S.}\ \bibnamefont {Reis}}, \bibinfo
  {author} {\bibfnamefont {Romain}\ \bibnamefont {Bachelard}}, \ and\ \bibinfo
  {author} {\bibfnamefont {Alan~C.}\ \bibnamefont {Santos}},\ }\bibfield
  {title} {\enquote {\bibinfo {title} {Quantum battery based on quantum discord
  at room temperature},}\ }\href {\doibase 10.1088/2058-9565/ac57f3} {\bibfield
   {journal} {\bibinfo  {journal} {Quantum Science and Technology}\ }\textbf
  {\bibinfo {volume} {7}},\ \bibinfo {pages} {025020} (\bibinfo {year}
  {2022})}\BibitemShut {NoStop}%
\bibitem [{\citenamefont {Andolina}\ \emph {et~al.}(2019)\citenamefont
  {Andolina}, \citenamefont {Keck}, \citenamefont {Mari}, \citenamefont
  {Giovannetti},\ and\ \citenamefont {Polini}}]{Andolina}%
  \BibitemOpen
  \bibfield  {author} {\bibinfo {author} {\bibfnamefont {Gian~Marcello}\
  \bibnamefont {Andolina}}, \bibinfo {author} {\bibfnamefont {Maximilian}\
  \bibnamefont {Keck}}, \bibinfo {author} {\bibfnamefont {Andrea}\ \bibnamefont
  {Mari}}, \bibinfo {author} {\bibfnamefont {Vittorio}\ \bibnamefont
  {Giovannetti}}, \ and\ \bibinfo {author} {\bibfnamefont {Marco}\ \bibnamefont
  {Polini}},\ }\bibfield  {title} {\enquote {\bibinfo {title} {Quantum versus
  classical many-body batteries},}\ }\href {\doibase
  10.1103/physrevb.99.205437} {\bibfield  {journal} {\bibinfo  {journal}
  {Physical Review B}\ }\textbf {\bibinfo {volume} {99}},\ \bibinfo {pages}
  {205437} (\bibinfo {year} {2019})}\BibitemShut {NoStop}%
\bibitem [{\citenamefont {Ferraro}\ \emph {et~al.}(2018)\citenamefont
  {Ferraro}, \citenamefont {Campisi}, \citenamefont {Andolina}, \citenamefont
  {Pellegrini},\ and\ \citenamefont {Polini}}]{Ferraro}%
  \BibitemOpen
  \bibfield  {author} {\bibinfo {author} {\bibfnamefont {Dario}\ \bibnamefont
  {Ferraro}}, \bibinfo {author} {\bibfnamefont {Michele}\ \bibnamefont
  {Campisi}}, \bibinfo {author} {\bibfnamefont {Gian~Marcello}\ \bibnamefont
  {Andolina}}, \bibinfo {author} {\bibfnamefont {Vittorio}\ \bibnamefont
  {Pellegrini}}, \ and\ \bibinfo {author} {\bibfnamefont {Marco}\ \bibnamefont
  {Polini}},\ }\bibfield  {title} {\enquote {\bibinfo {title} {{High-Power
  Collective Charging of a Solid-State Quantum Battery}},}\ }\href {\doibase
  10.1103/PhysRevLett.120.117702} {\bibfield  {journal} {\bibinfo  {journal}
  {Phys. Rev. Lett.}\ }\textbf {\bibinfo {volume} {120}},\ \bibinfo {pages}
  {117702} (\bibinfo {year} {2018})}\BibitemShut {NoStop}%
\bibitem [{\citenamefont {Juli\`a-Farr\'e}\ \emph {et~al.}(2020)\citenamefont
  {Juli\`a-Farr\'e}, \citenamefont {Salamon}, \citenamefont {Riera},
  \citenamefont {Bera},\ and\ \citenamefont {Lewenstein}}]{Salamon}%
  \BibitemOpen
  \bibfield  {author} {\bibinfo {author} {\bibfnamefont {Sergi}\ \bibnamefont
  {Juli\`a-Farr\'e}}, \bibinfo {author} {\bibfnamefont {Tymoteusz}\
  \bibnamefont {Salamon}}, \bibinfo {author} {\bibfnamefont {Arnau}\
  \bibnamefont {Riera}}, \bibinfo {author} {\bibfnamefont {Manabendra~N.}\
  \bibnamefont {Bera}}, \ and\ \bibinfo {author} {\bibfnamefont {Maciej}\
  \bibnamefont {Lewenstein}},\ }\bibfield  {title} {\enquote {\bibinfo {title}
  {Bounds on the capacity and power of quantum batteries},}\ }\href {\doibase
  10.1103/PhysRevResearch.2.023113} {\bibfield  {journal} {\bibinfo  {journal}
  {Phys. Rev. Research}\ }\textbf {\bibinfo {volume} {2}},\ \bibinfo {pages}
  {023113} (\bibinfo {year} {2020})}\BibitemShut {NoStop}%
\bibitem [{\citenamefont {Kamin}\ \emph {et~al.}(2020)\citenamefont {Kamin},
  \citenamefont {Tabesh}, \citenamefont {Salimi},\ and\ \citenamefont
  {Santos}}]{PhysRevE.102.052109}%
  \BibitemOpen
  \bibfield  {author} {\bibinfo {author} {\bibfnamefont {F.~H.}\ \bibnamefont
  {Kamin}}, \bibinfo {author} {\bibfnamefont {F.~T.}\ \bibnamefont {Tabesh}},
  \bibinfo {author} {\bibfnamefont {S.}~\bibnamefont {Salimi}}, \ and\ \bibinfo
  {author} {\bibfnamefont {Alan~C.}\ \bibnamefont {Santos}},\ }\bibfield
  {title} {\enquote {\bibinfo {title} {Entanglement, coherence, and charging
  process of quantum batteries},}\ }\href {\doibase
  10.1103/PhysRevE.102.052109} {\bibfield  {journal} {\bibinfo  {journal}
  {Phys. Rev. E}\ }\textbf {\bibinfo {volume} {102}},\ \bibinfo {pages}
  {052109} (\bibinfo {year} {2020})}\BibitemShut {NoStop}%
\bibitem [{\citenamefont {Le}\ \emph {et~al.}(2018)\citenamefont {Le},
  \citenamefont {Levinsen}, \citenamefont {Modi}, \citenamefont {Parish},\ and\
  \citenamefont {Pollock}}]{Le}%
  \BibitemOpen
  \bibfield  {author} {\bibinfo {author} {\bibfnamefont {Thao~P.}\ \bibnamefont
  {Le}}, \bibinfo {author} {\bibfnamefont {Jesper}\ \bibnamefont {Levinsen}},
  \bibinfo {author} {\bibfnamefont {Kavan}\ \bibnamefont {Modi}}, \bibinfo
  {author} {\bibfnamefont {Meera~M.}\ \bibnamefont {Parish}}, \ and\ \bibinfo
  {author} {\bibfnamefont {Felix~A.}\ \bibnamefont {Pollock}},\ }\bibfield
  {title} {\enquote {\bibinfo {title} {Spin-chain model of a many-body quantum
  battery},}\ }\href {\doibase 10.1103/PhysRevA.97.022106} {\bibfield
  {journal} {\bibinfo  {journal} {Phys. Rev. A}\ }\textbf {\bibinfo {volume}
  {97}},\ \bibinfo {pages} {022106} (\bibinfo {year} {2018})}\BibitemShut
  {NoStop}%
\bibitem [{\citenamefont {G\"uhne}\ \emph {et~al.}(2007)\citenamefont
  {G\"uhne}, \citenamefont {Hyllus}, \citenamefont {Gittsovich},\ and\
  \citenamefont {Eisert}}]{PhysRevLett.99.130504}%
  \BibitemOpen
  \bibfield  {author} {\bibinfo {author} {\bibfnamefont {O.}~\bibnamefont
  {G\"uhne}}, \bibinfo {author} {\bibfnamefont {P.}~\bibnamefont {Hyllus}},
  \bibinfo {author} {\bibfnamefont {O.}~\bibnamefont {Gittsovich}}, \ and\
  \bibinfo {author} {\bibfnamefont {J.}~\bibnamefont {Eisert}},\ }\bibfield
  {title} {\enquote {\bibinfo {title} {{Covariance Matrices and the
  Separability Problem}},}\ }\href {\doibase 10.1103/PhysRevLett.99.130504}
  {\bibfield  {journal} {\bibinfo  {journal} {Phys. Rev. Lett.}\ }\textbf
  {\bibinfo {volume} {99}},\ \bibinfo {pages} {130504} (\bibinfo {year}
  {2007})}\BibitemShut {NoStop}%
\bibitem [{\citenamefont {Li}\ \emph {et~al.}(2008)\citenamefont {Li},
  \citenamefont {Fei},\ and\ \citenamefont {Wang}}]{li2008separability}%
  \BibitemOpen
  \bibfield  {author} {\bibinfo {author} {\bibfnamefont {Ming}\ \bibnamefont
  {Li}}, \bibinfo {author} {\bibfnamefont {Shao-Ming}\ \bibnamefont {Fei}}, \
  and\ \bibinfo {author} {\bibfnamefont {Zhi-Xi}\ \bibnamefont {Wang}},\
  }\bibfield  {title} {\enquote {\bibinfo {title} {Separability and
  entanglement of quantum states based on covariance matrices},}\ }\href
  {\doibase 10.1088/1751-8113/41/20/202002} {\bibfield  {journal} {\bibinfo
  {journal} {Journal of Physics A: Mathematical and Theoretical}\ }\textbf
  {\bibinfo {volume} {41}},\ \bibinfo {pages} {202002} (\bibinfo {year}
  {2008})}\BibitemShut {NoStop}%
\bibitem [{\citenamefont {Gittsovich}\ and\ \citenamefont
  {G\"uhne}(2010)}]{gittsovich2010quantifying}%
  \BibitemOpen
  \bibfield  {author} {\bibinfo {author} {\bibfnamefont {Oleg}\ \bibnamefont
  {Gittsovich}}\ and\ \bibinfo {author} {\bibfnamefont {Otfried}\ \bibnamefont
  {G\"uhne}},\ }\bibfield  {title} {\enquote {\bibinfo {title} {Quantifying
  entanglement with covariance matrices},}\ }\href {\doibase
  10.1103/PhysRevA.81.032333} {\bibfield  {journal} {\bibinfo  {journal} {Phys.
  Rev. A}\ }\textbf {\bibinfo {volume} {81}},\ \bibinfo {pages} {032333}
  (\bibinfo {year} {2010})}\BibitemShut {NoStop}%
\bibitem [{\citenamefont {Hassani}(2013)}]{Hassani}%
  \BibitemOpen
  \bibfield  {author} {\bibinfo {author} {\bibfnamefont {S.}~\bibnamefont
  {Hassani}},\ }\href {https://books.google.co.kr/books?id=uRa4BAAAQBAJ} {\emph
  {\bibinfo {title} {{Mathematical Physics: A Modern Introduction to Its
  Foundations}}}}\ (\bibinfo  {publisher} {Springer International Publishing},\
  \bibinfo {year} {2013})\BibitemShut {NoStop}%
\bibitem [{\citenamefont {Deffner}\ and\ \citenamefont
  {Campbell}(2017)}]{Deffner_2017}%
  \BibitemOpen
  \bibfield  {author} {\bibinfo {author} {\bibfnamefont {Sebastian}\
  \bibnamefont {Deffner}}\ and\ \bibinfo {author} {\bibfnamefont {Steve}\
  \bibnamefont {Campbell}},\ }\bibfield  {title} {\enquote {\bibinfo {title}
  {{Quantum speed limits: from Heisenberg’s uncertainty principle to optimal
  quantum control}},}\ }\href {\doibase 10.1088/1751-8121/aa86c6} {\bibfield
  {journal} {\bibinfo  {journal} {Journal of Physics A: Mathematical and
  Theoretical}\ }\textbf {\bibinfo {volume} {50}},\ \bibinfo {pages} {453001}
  (\bibinfo {year} {2017})}\BibitemShut {NoStop}%
\bibitem [{\citenamefont {Nielsen}(1999)}]{Nielsen}%
  \BibitemOpen
  \bibfield  {author} {\bibinfo {author} {\bibfnamefont {M.~A.}\ \bibnamefont
  {Nielsen}},\ }\bibfield  {title} {\enquote {\bibinfo {title} {{Conditions for
  a Class of Entanglement Transformations}},}\ }\href {\doibase
  10.1103/PhysRevLett.83.436} {\bibfield  {journal} {\bibinfo  {journal} {Phys.
  Rev. Lett.}\ }\textbf {\bibinfo {volume} {83}},\ \bibinfo {pages} {436--439}
  (\bibinfo {year} {1999})}\BibitemShut {NoStop}%
\bibitem [{\citenamefont {Bennett}\ \emph {et~al.}(1999)\citenamefont
  {Bennett}, \citenamefont {DiVincenzo}, \citenamefont {Fuchs}, \citenamefont
  {Mor}, \citenamefont {Rains}, \citenamefont {Shor}, \citenamefont {Smolin},\
  and\ \citenamefont {Wootters}}]{LOCC}%
  \BibitemOpen
  \bibfield  {author} {\bibinfo {author} {\bibfnamefont {Charles~H.}\
  \bibnamefont {Bennett}}, \bibinfo {author} {\bibfnamefont {David~P.}\
  \bibnamefont {DiVincenzo}}, \bibinfo {author} {\bibfnamefont
  {Christopher~A.}\ \bibnamefont {Fuchs}}, \bibinfo {author} {\bibfnamefont
  {Tal}\ \bibnamefont {Mor}}, \bibinfo {author} {\bibfnamefont {Eric}\
  \bibnamefont {Rains}}, \bibinfo {author} {\bibfnamefont {Peter~W.}\
  \bibnamefont {Shor}}, \bibinfo {author} {\bibfnamefont {John~A.}\
  \bibnamefont {Smolin}}, \ and\ \bibinfo {author} {\bibfnamefont {William~K.}\
  \bibnamefont {Wootters}},\ }\bibfield  {title} {\enquote {\bibinfo {title}
  {Quantum nonlocality without entanglement},}\ }\href {\doibase
  10.1103/PhysRevA.59.1070} {\bibfield  {journal} {\bibinfo  {journal} {Phys.
  Rev. A}\ }\textbf {\bibinfo {volume} {59}},\ \bibinfo {pages} {1070--1091}
  (\bibinfo {year} {1999})}\BibitemShut {NoStop}%
\bibitem [{\citenamefont {Vidal}(2000)}]{Guifre}%
  \BibitemOpen
  \bibfield  {author} {\bibinfo {author} {\bibfnamefont {Guifr{\'e}}\
  \bibnamefont {Vidal}},\ }\bibfield  {title} {\enquote {\bibinfo {title}
  {Entanglement monotones},}\ }\href {\doibase 10.1080/09500340008244048}
  {\bibfield  {journal} {\bibinfo  {journal} {Journal of Modern Optics}\
  }\textbf {\bibinfo {volume} {47}},\ \bibinfo {pages} {355--376} (\bibinfo
  {year} {2000})}\BibitemShut {NoStop}%
\bibitem [{\citenamefont {Vidal}\ and\ \citenamefont
  {Werner}(2002)}]{VidalWerner}%
  \BibitemOpen
  \bibfield  {author} {\bibinfo {author} {\bibfnamefont {G.}~\bibnamefont
  {Vidal}}\ and\ \bibinfo {author} {\bibfnamefont {R.~F.}\ \bibnamefont
  {Werner}},\ }\bibfield  {title} {\enquote {\bibinfo {title} {Computable
  measure of entanglement},}\ }\href {\doibase 10.1103/PhysRevA.65.032314}
  {\bibfield  {journal} {\bibinfo  {journal} {Phys. Rev. A}\ }\textbf {\bibinfo
  {volume} {65}},\ \bibinfo {pages} {032314} (\bibinfo {year}
  {2002})}\BibitemShut {NoStop}%
\bibitem [{\citenamefont {Breuer}\ and\ \citenamefont
  {Petruccione}(2007)}]{10.1093/acprof:oso/9780199213900.001.0001}%
  \BibitemOpen
  \bibfield  {author} {\bibinfo {author} {\bibfnamefont {Heinz-Peter}\
  \bibnamefont {Breuer}}\ and\ \bibinfo {author} {\bibfnamefont {Francesco}\
  \bibnamefont {Petruccione}},\ }\href {\doibase
  10.1093/acprof:oso/9780199213900.001.0001} {\emph {\bibinfo {title} {{{The
  Theory of Open Quantum Systems}}}}}\ (\bibinfo  {publisher} {Oxford
  University Press},\ \bibinfo {year} {2007})\BibitemShut {NoStop}%
\bibitem [{\citenamefont {D\"ur}\ \emph {et~al.}(2000)\citenamefont {D\"ur},
  \citenamefont {Vidal},\ and\ \citenamefont {Cirac}}]{Wolfgang}%
  \BibitemOpen
  \bibfield  {author} {\bibinfo {author} {\bibfnamefont {W.}~\bibnamefont
  {D\"ur}}, \bibinfo {author} {\bibfnamefont {G.}~\bibnamefont {Vidal}}, \ and\
  \bibinfo {author} {\bibfnamefont {J.~I.}\ \bibnamefont {Cirac}},\ }\bibfield
  {title} {\enquote {\bibinfo {title} {Three qubits can be entangled in two
  inequivalent ways},}\ }\href {\doibase 10.1103/PhysRevA.62.062314} {\bibfield
   {journal} {\bibinfo  {journal} {Phys. Rev. A}\ }\textbf {\bibinfo {volume}
  {62}},\ \bibinfo {pages} {062314} (\bibinfo {year} {2000})}\BibitemShut
  {NoStop}%
\bibitem [{\citenamefont {Chowdhury}\ \emph {et~al.}(2022)\citenamefont
  {Chowdhury}, \citenamefont {Georges}, \citenamefont {Parcollet},\ and\
  \citenamefont {Sachdev}}]{SYKReview}%
  \BibitemOpen
  \bibfield  {author} {\bibinfo {author} {\bibfnamefont {Debanjan}\
  \bibnamefont {Chowdhury}}, \bibinfo {author} {\bibfnamefont {Antoine}\
  \bibnamefont {Georges}}, \bibinfo {author} {\bibfnamefont {Olivier}\
  \bibnamefont {Parcollet}}, \ and\ \bibinfo {author} {\bibfnamefont {Subir}\
  \bibnamefont {Sachdev}},\ }\bibfield  {title} {\enquote {\bibinfo {title}
  {{Sachdev-Ye-Kitaev models and beyond: Window into non-Fermi liquids}},}\
  }\href {\doibase 10.1103/RevModPhys.94.035004} {\bibfield  {journal}
  {\bibinfo  {journal} {Rev. Mod. Phys.}\ }\textbf {\bibinfo {volume} {94}},\
  \bibinfo {pages} {035004} (\bibinfo {year} {2022})}\BibitemShut {NoStop}%
\bibitem [{Note1()}]{Note1}%
  \BibitemOpen
  \bibinfo {note} {{We conjecture in this regard that the asymptotics of the
  bound (solid blue line in Fig.~\ref {fig:SYK_plot_qa}) apparently not
  reaching unity is a finite $N$ effect ($N=12$ in our
  simulations).}}\BibitemShut {Stop}%
\bibitem [{\citenamefont {Ollivier}\ and\ \citenamefont {Zurek}(2001)}]{Zurek}%
  \BibitemOpen
  \bibfield  {author} {\bibinfo {author} {\bibfnamefont {Harold}\ \bibnamefont
  {Ollivier}}\ and\ \bibinfo {author} {\bibfnamefont {Wojciech~H.}\
  \bibnamefont {Zurek}},\ }\bibfield  {title} {\enquote {\bibinfo {title}
  {{Quantum Discord: A Measure of the Quantumness of Correlations}},}\ }\href
  {\doibase 10.1103/PhysRevLett.88.017901} {\bibfield  {journal} {\bibinfo
  {journal} {Phys. Rev. Lett.}\ }\textbf {\bibinfo {volume} {88}},\ \bibinfo
  {pages} {017901} (\bibinfo {year} {2001})}\BibitemShut {NoStop}%
\bibitem [{\citenamefont {Adesso}\ \emph {et~al.}(2016)\citenamefont {Adesso},
  \citenamefont {Bromley},\ and\ \citenamefont {Cianciaruso}}]{Adesso_2016}%
  \BibitemOpen
  \bibfield  {author} {\bibinfo {author} {\bibfnamefont {Gerardo}\ \bibnamefont
  {Adesso}}, \bibinfo {author} {\bibfnamefont {Thomas~R.}\ \bibnamefont
  {Bromley}}, \ and\ \bibinfo {author} {\bibfnamefont {Marco}\ \bibnamefont
  {Cianciaruso}},\ }\bibfield  {title} {\enquote {\bibinfo {title} {Measures
  and applications of quantum correlations},}\ }\href {\doibase
  10.1088/1751-8113/49/47/473001} {\bibfield  {journal} {\bibinfo  {journal}
  {Journal of Physics A: Mathematical and Theoretical}\ }\textbf {\bibinfo
  {volume} {49}},\ \bibinfo {pages} {473001} (\bibinfo {year}
  {2016})}\BibitemShut {NoStop}%
\bibitem [{\citenamefont {Kobrin}\ \emph {et~al.}(2021)\citenamefont {Kobrin},
  \citenamefont {Yang}, \citenamefont {Kahanamoku-Meyer}, \citenamefont
  {Olund}, \citenamefont {Moore}, \citenamefont {Stanford},\ and\ \citenamefont
  {Yao}}]{PhysRevLett.126.030602}%
  \BibitemOpen
  \bibfield  {author} {\bibinfo {author} {\bibfnamefont {Bryce}\ \bibnamefont
  {Kobrin}}, \bibinfo {author} {\bibfnamefont {Zhenbin}\ \bibnamefont {Yang}},
  \bibinfo {author} {\bibfnamefont {Gregory~D.}\ \bibnamefont
  {Kahanamoku-Meyer}}, \bibinfo {author} {\bibfnamefont {Christopher~T.}\
  \bibnamefont {Olund}}, \bibinfo {author} {\bibfnamefont {Joel~E.}\
  \bibnamefont {Moore}}, \bibinfo {author} {\bibfnamefont {Douglas}\
  \bibnamefont {Stanford}}, \ and\ \bibinfo {author} {\bibfnamefont
  {Norman~Y.}\ \bibnamefont {Yao}},\ }\bibfield  {title} {\enquote {\bibinfo
  {title} {{Many-Body Chaos in the Sachdev-Ye-Kitaev Model}},}\ }\href
  {\doibase 10.1103/PhysRevLett.126.030602} {\bibfield  {journal} {\bibinfo
  {journal} {Phys. Rev. Lett.}\ }\textbf {\bibinfo {volume} {126}},\ \bibinfo
  {pages} {030602} (\bibinfo {year} {2021})}\BibitemShut {NoStop}%
\end{thebibliography}%

\end{document}